\documentclass[reqno]{amsart}
\usepackage[foot]{amsaddr}
\usepackage{graphicx}
\graphicspath{ {./figures/} }
\usepackage[margin=3cm]{geometry}
\usepackage{amsmath, amssymb,amsthm}
\usepackage{cite}
\usepackage[scr=rsfs]{mathalpha}
\usepackage[shortlabels]{enumitem}
\usepackage{mlmodern}
\DeclareSymbolFont{largesymbols}{OMX}{cmex}{m}{n} 
\usepackage{mathtools} 
\usepackage{stmaryrd} 
\usepackage{tikz}
\usetikzlibrary{decorations.pathreplacing,patterns,math,external,decorations.pathmorphing,arrows.meta}
\usepackage{upref} 
\usepackage[colorlinks]{hyperref}
\hypersetup{citecolor=blue,filecolor=blue,linkcolor=blue,urlcolor=navyblue}
\definecolor{navyblue}{rgb}{0.0, 0.0, 0.5}
\usepackage{CJKutf8}
\DeclareRobustCommand{\rc}[1]{\begin{CJK*}{UTF8}{gbsn}#1\end{CJK*}}

\newtheorem{thm}{Theorem}[section]
\newtheorem{prop}[thm]{Proposition}

\newtheorem{cor}[thm]{Corollary}
\theoremstyle{definition}
\newtheorem{defn}{Definition}
\newtheorem{rmk}{Remark}[section]

\newtheorem*{claim*}{Claim}

\newcommand{\N}{\mathbb{N}}
\newcommand{\Z}{\mathbb{Z}}
\newcommand{\R}{\mathbb{R}}
\newcommand{\C}{\mathbb{C}}

\newcommand{\Sb}{\mathbb{S}}

\renewcommand{\Im}{\operatorname{\mathrm{Im}}}
\renewcommand{\Re}{\operatorname{\mathrm{Re}}}

\newcommand{\E}{\mathbf{E}}
\newcommand{\oneb}{\mathbf{1}}
\renewcommand{\P}{\mathbf{P}}

\newcommand{\CN}{\mathcal{CN}}
\newcommand{\RN}{\mathcal{N}}

\newcommand{\dsim}{\sim}
\newcommand{\rvsim}{\overset{d}{=}}

\newcommand{\Tr}{\operatorname{Tr}}

\newcommand{\hs}{\mathrm{hs}}

\newcommand{\U}{\mathrm{U}}
\newcommand{\Sp}{\mathrm{Sp}}

\newcommand{\sq}{\mathbf{s}}

\newcommand{\Haf}{\operatorname{Haf}}
\newcommand{\Per}{\operatorname{Per}}

\newcommand{\sech}{\operatorname{sech}}
\newcommand{\op}[1]{\operatorname{#1}}

\newcommand\numberthis{\stepcounter{equation}\tag{\theequation}}
\numberwithin{equation}{section}

\begin{document}

\title[]{Anticoncentration and entanglement in Gaussian boson sampling}

\author{Laura Shou$^{1,2}$}
\author{Adam Ehrenberg$^{1,2}$}
\author{Yu-Xin Wang (\rc{王语馨})$^2$}
\author{Joseph T. Iosue$^{1,2,3}$}
\author{Alexey V. Gorshkov$^{1,2}$}

\address{\normalfont$^1$Joint Quantum Institute, Department of Physics, NIST/University of Maryland, College Park, MD 20742, USA}

\address{\normalfont$^2$Joint Center for Quantum Information and Computer Science, NIST/University of Maryland, College Park, MD, 20742, USA}

\address{\normalfont$^3$Present Address: Microsoft Quantum, Redmond, WA 98052, USA}

\begin{abstract}
Anticoncentration and entanglement are two properties used to study classical hardness or easiness of Gaussian boson sampling in varying regimes. In this paper, we consider three directions concerning anticoncentration and entanglement in Gaussian boson sampling: 
(1) We derive closed-form expressions for the second moment of hafnians of symmetric Gaussian products, and use this to precisely locate the anticoncentration transition as a function of the number of squeezed input modes.
(2) We derive closed-form expressions for the R\'enyi-$\alpha$ Page curves for Gaussian boson sampling.
(3) We study unequal input squeezing parameters $(s_i)_i$, and prove estimates as well as monotonicity of the average-case R\'enyi-2 entropy Page curve in terms of the magnitudes $|s_i|$, which allow for extending equal squeezing results to unequal squeezing.
\end{abstract}

\maketitle

\section{Introduction}

Quantum random sampling tasks, such as random circuit sampling \cite{boixo2018characterizing,bouland2019complexity} and boson sampling \cite{aa,hamilton2017gaussian}, are proposed as routes to demonstrating quantum advantage in currently realizable experimental systems; see \cite{hangleiter2023computational} for a review.
In these sampling tasks, anticoncentration of the output probabilities and entanglement of the system are used to theoretically analyze regimes of classical hardness or easiness of sampling.
For example, anticoncentration can be used to provide evidence for hardness of approximate sampling, while limited entanglement can allow for classically efficient matrix product state simulations \cite{vidal2003efficient,oh2024classical}.
These properties have been studied in many regimes, including for Fock or Gaussian boson sampling, e.g. in \cite{ehrenberg2025transition,ehrenberg2025second,matinez2024linear,mhiri2026boson,kolarovszki2026general} (anticoncentration) and \cite{serafini2007canonical,serafini2007teleportation,Fukuda2019Typical-entangl,iosue2023page,youm2025average,shou2026entanglement,zhao2026weak} (entanglement), among others.

We consider a photonic system with $m$ modes and an initial separable squeezed Gaussian state on these modes, with squeezing parameters $(s_i)_{i=1}^m$. For simplicity, we typically take $k$ modes with equal squeezing parameter $s>0$, and the remaining $m-k$ modes as vacuum, though, in the last Section~\ref{sec:unequal}, we will relax this condition and consider general $(s_i)_{i=1}^m$.
In a Gaussian boson sampling experiment, the initial Gaussian state is evolved through a network of beamsplitters and phaseshifters, which can be described by an $m\times m$ linear optical unitary matrix $U$. The resulting Gaussian state at the end is measured in the photon number basis, producing a vector $\mathbf n\in\{0,1,2,\ldots\}^m$ of photon counts, with total photon number $N\equiv 2n:=\sum_{j=1}^m\mathbf n_j$ (note that $N$ is even because the squeezing process only produces superpositions of Fock states with even counts). 
The output $\mathbf n$ is said to be collision-free if $\mathbf n\in\{0,1\}^m$. For such outputs, and in the case of $k$ equally squeezed modes with squeezing parameter $s>0$, the probability of observing photon count $\mathbf n$ is \cite{hamilton2017gaussian,kruse2019detailed}
\begin{align}\label{eqn:prob}
\Pr[\mathbf n]&=\frac{\tanh^{2n} s}{\cosh^k s}|[\Haf(UI_kU^T)_{\mathbf n,\mathbf n}]|^2,
\end{align}
where $I_k$ is the diagonal matrix consisting of $k$ 1s and $m-k$ 0s, $(UI_kU^T)_{\mathbf n,\mathbf n}$ denotes the $N\times N$ submatrix of $(UI_kU^T)$ obtained by keeping only the rows and columns corresponding to 1s in $\mathbf n$, and $\Haf[X]$ denotes the \emph{hafnian} of a symmetric matrix $X$,
\begin{align}
\Haf[X]&=\sum_{\pi\in\mathcal P_2(2n)}\prod_{\{i,j\}\in\pi}X_{ij},
\end{align}
for $\mathcal P_2(2n)$ the set of all perfect matchings of $2n$ elements.

Anticoncentration of the output probabilities refers to the property that most output probabilities $\P[\mathbf n]$ are roughly uniform over the random unitary ensemble, which intuitively suggests that classical sampling may be difficult, since one has to consider all possible outputs, rather than just a few if the output probabilities were concentrated on certain outputs \cite{aa}. More concretely, weak anticoncentration as measured using second moments and the Paley--Zygmund inequality can be used to give evidence for classical hardness of approximate average-case sampling \cite{hangleiter2023computational}.

The initial squeezed Gaussian input state is not entangled across the input modes, but may become entangled after evolving through the linear optical unitary $U$. For Haar random $U$, the average-case entanglement for the resulting state can be expressed through the R\'enyi-$\alpha$ Page curves \cite{page1993average,iosue2023page,youm2025average}. For simplicity, we first consider all input modes equally squeezed with squeezing parameter $s>0$. For average-case R\'enyi-$\alpha$ entropy $S_\alpha(U;s,r)$ across a subsystem cut of size $k=rm$, one can define the $m\to\infty$ Page curve $\lim_{m\to\infty}\frac1m\E S_\alpha(U;s,r)$. This describes the resulting entanglement across different subsystem sizes for the Haar random unitary setting.

In this paper, we address several questions concerning anticoncentration and entanglement in Gaussian boson sampling. 
\begin{enumerate}

\item First, we derive closed-form expressions for the second moments of hafnians of the symmetric product of $k\times 2n$ Gaussian random matrices, and use this to analytically locate the transition as a function of the number of squeezed input modes $k$ between weak anticoncentration and lack of anticoncentration that was originally studied numerically in \cite{ehrenberg2025transition,ehrenberg2025second}. We find the anticoncentration transition occurs at $k\propto n^2/\log n$. (Theorem~\ref{thm:moments}.)

\item Next, we use random matrix resolvent results to derive closed-form expressions for the R\'enyi-$\alpha$ Page curves, for $\alpha\ge2$ an integer, in the limit as the number of modes $m\to\infty$. (Theorem~\ref{thm:s2} and Corollary~\ref{cor:alpha}.) These agree with the infinite series expansions derived in \cite{iosue2023page,youm2025average}. 
We also give an alternate derivation of the first subleading (constant order) term of the R\'enyi-$\alpha$ Page curve.

\item Finally, we consider entanglement with unequal input squeezing parameters $(s_i)_i$. We prove monotonicity of average-case R\'enyi-2 entropy as a function of the squeezing magnitudes $|s_i|$ (Theorem~\ref{thm:monotonicity}), which was conjectured in \cite{iosue2023page}. 
This allows us to characterize entanglement in the broad setting of an arbitrary initial product squeezed state. 
The monotonicity also applies to other random ensembles including depth-$d$ random linear optical networks with independent Haar random beamsplitter-phaseshifter combinations. We also prove separate bounds for the case of $L$ modes equally squeezed and the remaining $m-L$ modes vacuum (Theorem~\ref{thm:unequal}).

\end{enumerate}

\subsection{Outline and notation}
The rest of the paper is organized as follows.
\begin{itemize}
\item In Section~\ref{sec:m2}, we study the anticoncentration transition and hafnian moments, proving Theorem~\ref{thm:moments}.
\item In Section~\ref{sec:page}, we derive the closed-form expressions for the R\'enyi-$\alpha$ Page curves, proving Theorem~\ref{thm:s2} and Corollary~\ref{cor:alpha}.
\item In Section~\ref{sec:unequal}, we study entanglement with unequal squeezing parameters, proving Theorems~\ref{thm:monotonicity} and \ref{thm:unequal}.
\end{itemize}
Throughout the paper, generic constants $c,C$ may change from line to line.
In all sections, the number of modes is $m$. In Section~\ref{sec:m2}, we also consider $1\le k\le m$ the number of initially squeezed modes, and $2n$, the observed photon count.
In Sections~\ref{sec:page} and \ref{sec:unequal}, which consider entanglement entropies, we will use $k$ to refer to the size of the subsystem $\Gamma$, $|\Gamma|=k$. This $k$ is entirely unrelated to the $k$ in Section~\ref{sec:m2}, but maintains consistency with some of the cited entanglement works.

\section{Anticoncentration transition and hafnian moments}\label{sec:m2}

In this section, we prove Theorem~\ref{thm:moments} on anticoncentration and moments for random hafnians in Gaussian boson sampling. This analytically proves the precise location of the anticoncentration transition observed in \cite{ehrenberg2025transition,ehrenberg2025second}.
In this setting, we consider Haar random linear optical unitaries $U$, initially separable squeezed states with equal squeezing parameter $s>0$ on $k$ of $m$ modes, and work in the dilute regime where the expected photon number $\E[2n]=k\sinh^2(s)$ is $o(\sqrt{m})$, which gives large probability for photon count outcomes to be collision-free.
The observed photon number is $N=2n$.
From the hiding property \cite{deshpande2022quantum,shou2026proof,hiding2}, we know for sufficiently small $n$, such as $n=o(\sqrt{k})$ or $nk=o(m)$, the submatrices $(UI_kU^T)_{\mathbf n,\mathbf n}$ in the output probabilities \eqref{eqn:prob} are well-approximated in total variation distance (TVD) by a suitably normalized symmetric product of standard complex \emph{Gaussian} matrices $X$. 
The hiding property is also conjectured to hold for further $n$ range when $k=o(m)$ \cite{deshpande2022quantum,ehrenberg2025transition}.
For these reasons, as in \cite{ehrenberg2025transition,ehrenberg2025second}, we work in the dilute regime and study Gaussian matrices $X$, which have independent entries, in place of the linear optical unitary $U$. (Note however that TVD does not directly transfer moment asymptotics; see \cite[\S A.2]{ehrenberg2025second} for discussion on indirect anticoncentration transfer, and Remark~\ref{rmk:unitary} for unitary hafnian moments \cite{mhiri2026boson,kolarovszki2026general}.)

For $X$ a $k\times 2n$ matrix of i.i.d. standard complex Gaussians, $X_{ij}\dsim\CN(0,1)$, let
\begin{align}
M_t(k,n):=\E[|\Haf(X^TX)|^{2t}],\quad\text{and}\quad m_t(k,n):={M_1(k,n)^t}/{M_t(k,n)}.
\end{align}
\begin{defn}
We say there is weak anticoncentration if $m_2(k,n)\ge n^{-O(1)}$, and lack of anticoncentration if $m_2(k,n)=n^{-\omega(1)}$.
\end{defn}

We give a closed-form expression for $m_2(k,n)$ and use this to prove that the precise location of the anticoncentration transition studied in \cite{ehrenberg2025transition,ehrenberg2025second} is at $k\propto n^2/\log n$.

\begin{thm}[hafnian moments and anticoncentration transition]\label{thm:moments}
Let $N=2n$, and let $M_t(k,n):=\E|\Haf(X^TX)|^{2t}$ for $X$ a $k\times 2n$ matrix of i.i.d. standard complex Gaussians. 
\begin{enumerate}[(i)]
\item For any $t\in\N$,
\begin{align}\label{eqn:haf-per}
M_t(k,n)&=\E_{G,H}[\Per(G^TH)^N],
\end{align}
for $G,H\in\R^{k\times t}$ i.i.d. standard real Gaussian matrices. Note that $G^TH$ is a $t\times t$ matrix.

\item For $t=2$, the above evaluates to
\begin{align}\label{eqn:M2}
M_2(k,n)&=(2n)!2^{2n}[(k/2)_n]^2\sum_{r=0}^n\binom{n}{r}^2\frac{(1/2)_r}{(k/2)_r},
\end{align}
where $(x)_r=\prod_{j=0}^{r-1}(x+j)$ is the Pochhammer symbol/rising factorial.

\item There is weak anticoncentration for $k=\Omega(n^2/\log n)$, and no weak anticoncentration for $k=o(n^2/\log n)$. Additionally, if $k/n^2\to\infty$, then
\begin{align}\label{eqn:m2-limit}
m_2(k,n)=\frac{1+o(1)}{\sqrt{\pi n}},
\end{align}
as $n\to\infty$.
\end{enumerate}
\end{thm}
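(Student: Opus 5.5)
The plan is to prove (i) with a Wick/Gaussian-integral representation of the hafnian, then evaluate the $t=2$ case by diagonalizing a rank-two quadratic form and doing radial and angular integrals. Part (iii) then reduces to asymptotics of the finite sum in \eqref{eqn:M2}.

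\textbf{Part (i).} Let $x_1,\dots,x_{2n}\in\C^k$ be the columns of $X$, so $(X^TX)_{ij}=x_i^Tx_j$. By Wick's theorem for a standard real Gaussian $g\in\R^k$,
\[
\Haf(X^TX)=\E_g\Big[\prod_{j=1}^{2n}(g^Tx_j)\Big].
\]
Writing $|\Haf|^{2t}$ as a product of $t$ copies of this and $t$ complex conjugates introduces independent real Gaussians $g_1,\dots,g_t,h_1,\dots,h_t$, which form the columns of $G$ and $H$. Exchanging expectations is justified by Fubini, since everything is polynomial in Gaussians. The $X$-expectation then factorizes over the $2n$ i.i.d. columns, giving $M_t=\E_{G,H}\big[(\E_x[\prod_a g_a^Tx\,\prod_b\overline{h_b^Tx}])^{2n}\big]$. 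For $x\dsim\CN(0,I_k)$ we have $\E[xx^T]=0$ and $\E[x\bar x^T]=I$. Hence the complex Wick theorem pairs each $g_a$ only with some $h_b$, and the inner expectation equals $\sum_{\sigma\in S_t}\prod_a g_a^Th_{\sigma(a)}=\Per(G^TH)$. This gives \eqref{eqn:haf-per}.

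\textbf{Part (ii).} For $t=2$, the permanent is $P=h_1^TSh_2$ with $S=g_1g_2^T+g_2g_1^T$. This matrix is symmetric of rank two, with eigenvalues
\[
\lambda_\pm=g_1\cdot g_2\pm|g_1||g_2|=|g_1||g_2|(\cos\theta\pm1).
\]
Conditionally on $G$, rotation invariance of $h_1,h_2$ gives $P\rvsim\lambda_+u_1v_1+\lambda_-u_2v_2$ with $u_i,v_i$ i.i.d. $\RN(0,1)$. Expanding binomially and keeping only even powers yields
\[
\E[P^{2n}\mid G]=\sum_{r}\binom{2n}{2r}\big((2n-2r-1)!!\,(2r-1)!!\big)^2\lambda_+^{2(n-r)}\lambda_-^{2r}.
\]
Averaging over $G$ splits into two independent pieces:
\begin{itemize}
\item the radial parts, with $\E|g|^{2n}=2^n(k/2)_n$ for each of $g_1,g_2$, which produce the prefactor $2^{2n}[(k/2)_n]^2$;
\item the angle $c=\cos\theta$, which has density proportional to $(1-c^2)^{(k-3)/2}$ on $[-1,1]$, so that $\E[(1+c)^{2(n-r)}(1-c)^{2r}]$ is a Beta integral expressible by Pochhammer ratios in $k/2$.
\end{itemize}
The step I expect to be the main obstacle is the final resummation. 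One has to recombine the double-factorial weights and the Beta-integral Pochhammers into the single sum $\sum_r\binom nr^2(1/2)_r/(k/2)_r$. My first attempt would be to expand $(1+c)^{2(n-r)}(1-c)^{2r}$ in powers of $(1-c^2)$ and $c^2$, whose moments are clean. I would then collapse the resulting double sum with Chu--Vandermonde or Pfaff--Saalschütz. As a fallback, I would verify that both sides satisfy the same recurrence in $n$, using Zeilberger-style creative telescoping.

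\textbf{Part (iii).} The same $t=1$ computation gives $M_1=(2n-1)!!\,2^n(k/2)_n=(2n)!(k/2)_n/n!$. Therefore
\[
m_2=\frac{\binom{2n}{n}4^{-n}}{\Sigma},\qquad \Sigma:=\sum_{r=0}^n\binom nr^2\frac{(1/2)_r}{(k/2)_r},
\]
and $\binom{2n}{n}4^{-n}=(1+o(1))/\sqrt{\pi n}$. The question is thus whether $\Sigma$ is polynomially bounded in $n$. The estimates for $\Sigma$ are as follows.
\begin{itemize}
\item \emph{Upper bound.} From $\binom nr\le n^r/r!$, $(1/2)_r\le r!$ and $(k/2)_r\ge(k/2)^r$, each term is at most $(2n^2/k)^r/r!$. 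Hence $\Sigma\le e^{2n^2/k}$.
\item \emph{Case $k/n^2\to\infty$.} The bound gives $\Sigma\to1$, since the $r=0$ term is $1$, which proves \eqref{eqn:m2-limit}.
\item \emph{Case $k=\Omega(n^2/\log n)$.} The bound gives $\Sigma\le n^{O(1)}$, which is weak anticoncentration.
\item \emph{Lower bound.} For $r\le\sqrt n$ we have $\binom nr\ge(n/2)^r/r!$, and also $(1/2)_r\ge r!/(C\sqrt r)$ and $(k/2)_r\le(k/2+r)^r$. So a single term is at least roughly $(n^2/(Ckr))^r$, with $C$ an absolute constant (the factor $1/(C\sqrt r)$ costs at most another constant per factor).
\item \emph{Case $k=o(n^2/\log n)$.} Write $n^2/k=\omega(n)\log n$ with $\omega(n)\to\infty$. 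Choose $r=\lceil\log n\rceil$ when $n^2/k\le n$, and $r=\lceil\sqrt n\rceil$ otherwise. In the first case the term is at least $(\omega(n)/C)^{\log n}=n^{\omega(1)}$. In the second case it is at least $(\sqrt n/C')^{\sqrt n}$, which is also superpolynomial. Either way $\Sigma=n^{\omega(1)}$, so $m_2=n^{-\omega(1)}$.
\end{itemize}
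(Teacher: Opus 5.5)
Parts (i) and (iii) are correct and essentially follow the paper. Your bound $\Sigma\le e^{2n^2/k}$ is even slightly cleaner. There is one slip: the single-term bound $(n^2/(Ckr))^r$ uses $k/2+r\lesssim k$, which fails in the branch $k<n$, $r=\lceil\sqrt n\rceil$ when $k\ll r$. There you should bound $k/2+r\le n$ instead, or use that $\Sigma$ is decreasing in $k$ as the paper does; your conclusion $(\sqrt n/C')^{\sqrt n}$ is unaffected.

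\textbf{The genuine gap is in (ii).} Your reduction there is a correct alternative to the paper's mgf computation. Since $\binom{2n}{2r}\big((2n-2r-1)!!\,(2r-1)!!\big)^2=(2n)!\binom{2n-2r}{n-r}\binom{2r}{r}4^{-n}$, with $\rho=|g_1||g_2|$ and $q=\cos\theta$ you have
\[
\E[P^{2n}\mid G]=(2n)!\,\rho^{2n}\sum_{r=0}^n\binom{2n-2r}{n-r}\binom{2r}{r}4^{-n}(1+q)^{2(n-r)}(1-q)^{2r}.
\]
This is $(2n)!\rho^{2n}$ times $[w^n]\big((1-w(1+q)^2)(1-w(1-q)^2)\big)^{-1/2}$, exactly the paper's conditional mgf. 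But the whole content of \eqref{eqn:M2} is that this averages to $(2n)!\,\E\rho^{2n}\sum_j\binom nj^2(1/2)_j/(k/2)_j$, and you leave that as an unexecuted plan.

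The order you propose also works against you. Integrating in $c$ first produces weights $(\tfrac{k-1}{2})_{2n-2r}(\tfrac{k-1}{2})_{2r}/(k-1)_{2n}$ with doubled indices, i.e.\ a terminating well-poised ${}_4F_3(1)$. That is neither a Chu--Vandermonde nor a balanced Pfaff--Saalsch\"utz sum. Equating it with the ${}_3F_2(1)$ on the right is a transformation, not a summation, and a Zeilberger recurrence would be a certificate you have not produced.

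\textbf{The missing idea} is that the identity already holds pointwise in $q$, so you should resum before averaging:
\[
\sum_{r=0}^n\binom{2n-2r}{n-r}\binom{2r}{r}4^{-n}(1+q)^{2(n-r)}(1-q)^{2r}=\sum_{j=0}^n\binom nj^2q^{2j}.
\]
After that, $\E q^{2j}=(1/2)_j/(k/2)_j$ (as $q^2\sim\mathrm{Beta}(1/2,(k-1)/2)$) and $\E\rho^{2n}=2^{2n}[(k/2)_n]^2$ give \eqref{eqn:M2} at once.

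The paper proves the pointwise identity via $\frac{1}{\sqrt{a^2-b^2}}=\frac{1}{2\pi}\int_0^{2\pi}\frac{d\theta}{a+b\cos\theta}$. This turns the generating function into $\frac1{2\pi}\int_0^{2\pi}|1+qe^{i\theta}|^{2n}\,d\theta$.

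Within your own setup there is an equally short probabilistic proof. With $\zeta=u_1+iu_2$ and $\eta=v_1+iv_2$,
\[
P/\rho=(u_1v_1-u_2v_2)+q(u_1v_1+u_2v_2)=\Re\big(\zeta(\eta+q\bar\eta)\big),
\]
which conditionally on $\eta$ is $\RN\big(0,|\eta|^2|1+qe^{-2i\arg\eta}|^2\big)$. Since $|\eta|^2\sim\chi^2_2$ has $\E|\eta|^{2n}=2^nn!$, and $\arg\eta$ is uniform and independent of $|\eta|$, you get
\[
\E[(P/\rho)^{2n}\mid q]=(2n-1)!!\,2^nn!\cdot\frac1{2\pi}\int_0^{2\pi}|1+qe^{i\theta}|^{2n}\,d\theta=(2n)!\sum_{j=0}^n\binom nj^2q^{2j}.
\]
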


\begin{rmk}\label{rmk:unitary}
One should be able to show leading order agreement of the above with the moments for actual unitary matrices in the collision-free/dilute regime, for example using asymptotic Weingarten calculus \cite{CollinsSniady2006}. 
However, we note that recent works \cite{mhiri2026boson,kolarovszki2026general} study the exact unitary case directly using a representation theoretic approach, which they use to derive closed-form expressions and anticoncentration for boson sampling with the exact linear optical unitary, in both the collision-free/dilute and saturated regimes. In particular, for Gaussian boson sampling, \cite[Proposition 5, Eq.~(100)]{kolarovszki2026general} gives anticoncentration bounds and locations which apply in arbitrary regimes for the exact linear optical unitary, for example implying a transition much closer to $k\approx m$ in the saturated regime $n=\Theta(m)$.
\end{rmk}

\begin{proof}[Proof of Theorem~\ref{thm:moments}(i)]
To evaluate $M_t(k,n)=\E|\Haf(X^TX)|^{2t}$, the main observation is that the hafnian simplifies by first applying Wick's/Isserlis' theorem, which states that for $Y=(Y_1,\ldots,Y_m)$ a centered multivariate complex or real Gaussian,
\begin{align}\label{eqn:wick}
\E[Y_1\cdots Y_m]&=\sum_{\pi\in\mathcal P_2(m)}\prod_{\{i,j\}\in\pi}\E[Y_iY_j]\equiv\Haf(\Sigma),
\end{align}
where $\Sigma$ is the $m\times m$ covariance matrix given by $\Sigma_{ij}=\E[Y_iY_j]$. 
For any fixed $k\times N$ matrix $X$, letting $X_i$ denote the $i$th column of $X$, the matrix $X^TX$ is seen to be the covariance matrix of $Y=(X_i\cdot g)_{i=1}^N$ for $g\sim\RN(0,I_k)$ real multivariate Gaussian.
Since the vector $Y$ is multivariate Gaussian for any fixed $X$, Wick's/Isserlis' theorem \eqref{eqn:wick} gives
\begin{align}
\Haf(X^TX)&=\E_g\prod_{i=1}^N(X_i\cdot g).
\end{align}
Let $g_1,\ldots,g_t,h_1,\ldots,h_t\dsim\RN(0,I_k)$ be i.i.d., and let $G$ be the $k\times t$ matrix whose columns are $g_1,\ldots,g_t$, and $H$ the $k\times t$ matrix whose columns are $h_1,\ldots,h_t$.
Recalling that all entries of $X$ are independent standard complex Gaussian, we can expand and use Wick's/Isserlis' theorem (in the normal usage direction) to get
\begin{align*}
\E|\Haf(X^TX)|^{2t}&=\E_X\E_{G,H}\prod_{i=1}^N(X_i\cdot g_1)\cdots (X_i\cdot g_t)(\bar X_i\cdot h_1)\cdots(\bar X_i\cdot h_t)\\
&=\E_{G,H}\prod_{i=1}^N\sum_{\substack{j_1,\ldots,j_t=1\\l_1,\ldots,l_t=1}}^k\E_{X_i}[(X_i)_{j_1}(g_1)_{j_1}\cdots(X_i)_{j_t}(g_t)_{j_t}(\bar X_i)_{l_1}(h_1)_{l_1}\cdots(\bar X_i)_{l_t}(h_t)_{l_t}]\\
&=\E_{G,H}\prod_{i=1}^N\sum_{\substack{j_1,\ldots,j_t=1\\l_1,\ldots,l_t=1}}^k
(g_1)_{j_1}\cdots (g_t)_{j_t}(h_1)_{l_1}\cdots(h_t)_{l_t}\sum_{\sigma\in S_t}
\prod_{m=1}^t\E_{X_i}[(X_i)_{j_m}(\bar X_i)_{l_{\sigma(m)}}]\\
&=\E_{G,H}\prod_{i=1}^N\sum_{\sigma\in S_t}(g_1\cdot h_{\sigma(1)})\cdots(g_t\cdot h_{\sigma(t)})=\E_{G,H}[\Per(G^TH)^N], \numberthis
\end{align*}
which is \eqref{eqn:haf-per}.
Note that in applying Wick's/Isserlis' theorem in the second to last line, we used that we must pair each non-conjugated $(X_i)_j$ with a conjugated $(\bar X_i)_j$ to obtain a nonzero term, which gives a sum over $\sigma\in S_t$ instead of over all possible pairings of $[2t]$.
\end{proof}

\begin{proof}[Proof of Theorem~\ref{thm:moments}(ii)]
From part (i), we have for $g,g',h,h'$ i.i.d. $\RN(0,I_k)$ real multivariate Gaussians,
\begin{align}
M_2(k,n)=\E|\Haf(X^TX)|^4&=\E_{g,g',h,h'}[(g\cdot h)(g'\cdot h')+(g\cdot h')(g'\cdot h)]^N.
\end{align}
Condition on (fix) $g,g'$. We will evaluate the moment-generating function (mgf) of $A:=(g\cdot h)(g'\cdot h')+(g\cdot h')(g'\cdot h)$, and then extract the $N$th moment of $A$.
We first rewrite $A$ in a more convenient form, as the inner product of two independent Gaussian random vectors. Let 
$x:=(g\cdot h,g'\cdot h)$, and $y:=(g\cdot h',g'\cdot h')$, which are independent centered real Gaussian vectors with covariance
\begin{align*}
\Sigma=\begin{pmatrix}\|g\|_2^2&g\cdot g'\\g\cdot g'&\|g'\|_2^2\end{pmatrix}.
\end{align*}
We see $A=\langle x|J|y\rangle$ for $J=\begin{pmatrix}0&1\\1&0\end{pmatrix}$. Since $x,y$ are independent Gaussians, we can use the $\RN(0,\Sigma)$ multivariate Gaussian mgf $\E[e^{\mathbf t^TZ}]=e^{\frac12\mathbf t^T\Sigma \mathbf t}$, followed by the mgf for quadratic forms of Gaussians (or just the Gaussian integral again), to evaluate
\begin{align*}
\E e^{t\langle x|J|y\rangle}=\E_x\E_ye^{t\langle x|J|y\rangle}&=\E_xe^{\frac12t^2x^TJ\Sigma Jx}
=\det(I-t^2J\Sigma J\Sigma)^{-1/2}.\numberthis\label{eqn:A-mgf}
\end{align*}
The eigenvalues of 
\begin{align*}
\Sigma J\Sigma J=\begin{pmatrix}(g\cdot g')^2+\|g\|_2^2\|g'\|_2^2&2(g\cdot g')\|g\|_2^2\\
2(g\cdot g')\|g'\|_2^2&(g\cdot g')^2+\|g\|_2^2\|g'\|_2^2\end{pmatrix}
\end{align*}
are $(g\cdot g'\pm\|g\|_2\|g'\|_2)^2$, which with \eqref{eqn:A-mgf} gives
\begin{align}\label{eqn:mgf}
\E[e^{tA}|g,g']&=\frac1{\sqrt{(1-t^2[\|g\|_2\|g'\|_2+ g\cdot g']^2)(1-t^2[\|g\|_2\|g'\|_2- g\cdot g']^2)}}.
\end{align}
Letting $\xi_N[g,g']$ denote the coefficient on $t^N$ in the power series expansion of $t\mapsto\E[e^{tA}|g,g']$, we have
\begin{align*}
\E[(g\cdot h)(g'\cdot h')+(g\cdot h')(g'\cdot h)]^N&=N!\,\E\xi_N[g,g'].
\end{align*}
To simplify $\E\xi_N[g,g']$, let $w:=t^2\|g\|_2^2\|g'\|_2^2$ and $q:=\frac{g}{\|g\|_2}\cdot\frac{g'}{\|g'\|_2}$. Then \eqref{eqn:mgf} can be written as ${1}/{\sqrt{(1-w(1+q)^2)(1-w(1-q)^2)}}$, and the relation between $w$ and $t$ implies
\begin{align}\label{eqn:xi}
\xi_N[g,g']&=\|g\|_2^N\|g'\|_2^N \zeta_{n}(q),
\end{align}
recalling $N=2n$, and where $\zeta_{n}(q)$ denotes the coefficient on $w^{n}$ in the power series expansion of $1/{\sqrt{(1-w(1+q)^2)(1-w(1-q)^2)}}$.
Since $q:=\frac{g}{\|g\|_2}\cdot\frac{g'}{\|g'\|_2}$ is independent of the vector norms $\|g\|_2$ and $\|g'\|_2$, we then get
\begin{align}\label{eqn:m2-zeta}
M_2(k,n)&=N!\E_g\|g\|_2^N\,\E_{g'}\|g'\|_2^N\,\E_q \zeta_{n}(q).
\end{align}
We have $\E\|g\|_2^N=2^n\frac{\Gamma(n+\frac{k}{2})}{\Gamma(\frac{k}{2})}=2^n(k/2)_n$, since it is the $n$th moment of a $\chi^2$ random variable with $k$ degrees of freedom.
To determine the coefficient of $w^n$, we could directly do a power series expansion; however, the following method ends up being nicer.
Using the identity $\frac{1}{\sqrt{a^2-b^2}}=\frac{1}{2\pi}\int_0^{2\pi}\frac{d\theta}{a+b\cos\theta}$ 
for $a>|b|$, with $a=1-w(1+q^2)$ and $b=-2wq$, take $w\in\R$ small, recall $q\in\R$, and rewrite 
\begin{align*}
\frac{1}{\sqrt{(1-w(1+q)^2)(1-w(1-q)^2)}}&=\frac1{2\pi}\int_0^{2\pi}\frac{d\theta}{1-w|1+qe^{i\theta}|^2}\\
&=\frac{1}{2\pi}\int_0^{2\pi}\sum_{j=0}^\infty w^j|1+qe^{i\theta}|^{2j}\,d\theta.\numberthis
\end{align*}
But now we can read off the coefficient of $w^n$ (since everything is convergent for small $w$) to be
\begin{align*}
\zeta_n(q)&=\frac1{2\pi}\int_0^{2\pi}(1+qe^{i\theta})^n(1+qe^{-i\theta})^n\,d\theta\\
&=\frac1{2\pi}\int_0^{2\pi}\sum_{j,\ell=0}^n\binom{n}{j}\binom{n}{\ell}q^{j+\ell}e^{i(j-\ell)\theta}\,d\theta=\sum_{j=0}^n\binom{n}{j}^2q^{2j}.\numberthis
\end{align*}
Finally, since $q\rvsim\langle u,v\rangle\rvsim u_1$ for $u,v$ independent random real unit vectors in $\Sb_\R^{k-1}$, we see $q^2\dsim\op{Beta}(1/2,(k-1)/2)$ for $k\ge2$, e.g. writing $u_1^2\rvsim g_1^2/(g_1^2+\|\tilde g\|_2^2)$, where $(g_1,\tilde g)\dsim\RN(0,I_k)$. The moments of $X\dsim\op{Beta}(\alpha,\beta)$ are $\E X^r=\prod_{j=0}^{r-1}\frac{\alpha+j}{\alpha+\beta+j}=\frac{(\alpha)_r}{(\alpha+\beta)_r}$, so from \eqref{eqn:m2-zeta} we obtain \eqref{eqn:M2} for $k\ge2$. For $k=1$, $q^2=1$ almost surely and we also obtain \eqref{eqn:M2}.
\end{proof}

\begin{proof}[Proof of Theorem~\ref{thm:moments}(iii)]
We have $M_1(k,n)=(2n-1)!!2^n(k/2)_n$ \cite{ehrenberg2025transition}, so using part (ii) gives
\begin{align*}
m_2(k,n)&=\frac{(2n-1)!!^2}{(2n)!\sum_{r=0}^n\binom{n}{r}^2\frac{(1/2)_r}{(k/2)_r}}
=\frac{1+o(1)}{\sqrt{\pi n}\sum_{r=0}^n\binom{n}{r}^2\frac{(1/2)_r}{(k/2)_r}}.\numberthis\label{eqn:m2-ratio}
\end{align*}
Let $Q_{n,k}:=\sum_{r=0}^n\binom{n}{r}^2\frac{(1/2)_r}{(k/2)_r}$. We need upper and lower bounds. Since $\binom{n}{r}\le\left(\frac{ne}{r}\right)^r$, we have
\begin{align*}
Q_{n,k}\le 1+\sum_{r=1}^n\left(\frac{ne}{r}\right)^{2r}\prod_{j=0}^{r-1}\frac{1/2+j}{k/2+j}
&\le 1+\sum_{r=1}^n\left(\frac{ne}{r}\right)^{2r}\prod_{j=0}^{r-1}\frac{1+2j}{k}\\
&\le1+\sum_{r=1}^n\left(\frac{ne}{r}\right)^{2r}\left(\frac{2r}{k}\right)^r
=1+\sum_{r=1}^n\left(\frac{2n^2e^2}{kr}\right)^r\\
&\le \exp\left(\frac{2e^2n^2}{k}\right),\numberthis\label{eqn:R-upper}
\end{align*}
where for the last inequality we used that $r^r\ge r!$.

For a lower bound, we will lower bound $Q_{n,k}$ by a single large term in the sum. 
Using $(2r-1)!!=(2r)!/(2^r r!)$ and $\binom{n}{r}\ge\frac{n^r}{r^r}$, then for any $1\le r\le n$,
\begin{align*}
Q_{n,k}\ge \frac{n^{2r}}{r^{2r}}\prod_{j=0}^{r-1}\frac{1+2j}{k+2j}\ge \frac{n^{2r}}{r^{2r}}\frac{(2r-1)!!}{(k+2r)^r}\ge \frac{n^{2r}}{r^{2r}}\frac{r^r}{2^r(k+2r)^r}=\left(\frac{n^2}{2r(k+2r)}\right)^r.\numberthis
\end{align*}
If $k\ge n$, take $r=\lfloor n^2/(4k)\rfloor\le n^2/(4k)\le n$, which gives
\begin{align*}
Q_{n,k}&\ge \left(\frac{2k}{k+k/2}\right)^r=e^{c'r}\ge e^{cn^2/k}.
\end{align*}
(Or use $1+n^2/k$ lower bound if $r=0$.)
If $k\le n$, then we can just use that $Q_{n,k}$ is decreasing in $k$, so $Q_{n,k}\ge Q_{n,n}$ and the above bound with $k=n$ gives $Q_{n,k}\ge e^{cn}$. Thus for any $k$,
\begin{align}\label{eqn:R-lower}
Q_{n,k}&\ge e^{c\min(n,n^2/k)}.
\end{align}

To locate the weak anticoncentration transition, recall from \eqref{eqn:m2-ratio} we have
\begin{align}\label{eqn:m2-simplified}
m_2(k,n)\sim\frac{1}{\sqrt{\pi n}}\frac{1}{Q_{n,k}}.
\end{align}
If $k=\Omega(n^2/\log n)$, then $n^2/k=O(\log n)$, and the upper bound \eqref{eqn:R-upper} on $Q_{n,k}$ gives $Q_{n,k}\le n^{O(1)}$, so \eqref{eqn:m2-simplified} gives
\begin{align}
m_2(k,n)&\ge n^{-O(1)}.
\end{align}
Additionally, if $k/n^2\to\infty$, then \eqref{eqn:R-upper} (along with observing $Q_{n,k}\ge1$ by its $r=0$ term) shows $Q_{n,k}\to1$, which gives \eqref{eqn:m2-limit}.

On the other hand, if $k=o(n^2/\log n)$, then $n^2/k=\omega(\log n)$.
The lower bound \eqref{eqn:R-lower} then shows
\begin{align}
m_2(k,n)\le \frac{1}{\sqrt{\pi n}}e^{-\omega(\log n)}=\frac1{\sqrt{\pi}}n^{-\omega(1)},
\end{align}
which gives lack of weak anticoncentration.
\end{proof}

\section{Page curves closed forms}\label{sec:page}

In this section, we prove closed-form expressions for the R\'enyi-$\alpha$ Page curves for $\alpha\ge2$ an integer.
We work with $m$ modes, with a subsystem $\Gamma$ of size $|\Gamma|=k$ (note, this is a different $k$ than used in Section~\ref{sec:m2}).
We prove the closed-form expression for R\'enyi-2 entropy in Theorem~\ref{thm:s2}, and extend this to R\'enyi-$\alpha$ entropy in Corollary~\ref{cor:alpha}. 
Recall the R\'enyi-$\alpha$ entropy for a reduced state $\rho(U)$ corresponding to a subsystem $\Gamma$ is
\begin{align}
S_\alpha(U):=\frac1{1-\alpha}\log\Tr(\rho(U)^\alpha).
\end{align}
For Gaussian states, the R\'enyi-2 entropy $S_2(U)$ takes a particularly convenient form,
\begin{align}
S_2(U)&=\frac12\log\det\sigma(U)=\frac12\Tr\log\sigma(U),
\end{align}
where $\sigma(U)$ is the $2k\times 2k$ covariance matrix for the reduced evolved state $\rho(U)$ corresponding to subsystem $\Gamma$ of size $|\Gamma|=k$.
Using this expression, \cite{iosue2023page} expressed the average-case R\'enyi-2 entropy $\E S_2(U)$ as an infinite series in terms of the matrix $W=\Pi UU^T\Pi\bar U\bar U^T\Pi$ for $\Pi$ the projection onto the subsystem $\Gamma$. Later, \cite{youm2025average} extended this to derive infinite series expressions for $\E S_\alpha(U)$, $\alpha>2$ an integer.

We give two methods to derive closed-form expressions for the R\'enyi-$\alpha$ Page curves, starting from the infinite series expressions in terms of $W$ from \cite{iosue2023page,youm2025average}. One method uses random matrix resolvent results to study $\E \Tr W^\ell$, and one uses direct summation of hypergeometric terms derived in the infinite series in \cite{iosue2023page}. The random matrix method also gives an alternate derivation of the non-leading constant-order term in the R\'enyi-$\alpha$ entropies as $m\to\infty$.

\begin{thm}[R\'enyi-2 Page curve]\label{thm:s2}
Let $U\dsim\mathcal H_m$ be Haar distributed. From an initial state with all modes equally squeezed by squeezing parameter $s>0$, the R\'enyi-2 Page curve is
\begin{multline}\label{eqn:pagecurve-s2}
\lim_{m\to\infty}\frac1m\E S_2(U;s,r)=\frac12\log\left(\frac{\cosh(2s)+\sqrt{1+|1-2r|^2\sinh^2(2s)}}{2}\right)\\
-\frac{|1-2r|}{2}\log\left(\frac{|1-2r|\cosh(2s)+\sqrt{1+|1-2r|^2\sinh^2(2s)}}{1+|1-2r|}\right).
\end{multline}
\end{thm}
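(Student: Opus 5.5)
Our plan is to start from the infinite-series representation of \cite{iosue2023page}, in which $S_2(U)=\frac12\Tr\log\sigma(U)$ is expanded in powers of the matrix $W=\Pi UU^T\Pi\bar U\bar U^T\Pi$. For equal squeezing the reduced covariance matrix is, up to a symplectic change of basis, a function of $W$ alone. Concretely, after block-diagonalizing we expect
\[
\det\sigma(U)=\det\bigl(I_k+\sinh^2(2s)(I_k-W)\bigr),
\]
so that
\[
S_2(U)=\frac12\Tr\log\bigl(\cosh^2(2s)I_k-\sinh^2(2s)W\bigr)=k\log\cosh(2s)-\frac12\sum_{\ell\ge1}\frac{\tanh^{2\ell}(2s)}{\ell}\Tr W^\ell.
\]
Whatever the precise normalization in \cite{iosue2023page}, the task reduces to computing $\lim_{m\to\infty}\frac1m\E\Tr f(W)$ for an explicit analytic $f$. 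That in turn requires the limiting spectral distribution of $W$ when $k=rm$.

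The key step is to identify this spectral law. Let $V=\Pi U$ be the $k\times m$ truncation. Then $UU^T$ is a Haar-distributed symmetric unitary (the circular orthogonal ensemble), and $\Pi UU^T\Pi$ is its $k\times k$ corner $A$, with $W=A\bar A=AA^*$ since $A$ is symmetric. So $W$ is the squared-singular-value matrix of a $k\times k$ truncation of a COE matrix. To leading order in $m$ the COE corner behaves like a truncated Haar unitary, so the empirical law of $W$ converges to the Jacobi (free-multiplicative) law of $PQP$ for free projections of traces $r,r$. This is the law of $\Pi U\Pi U^*\Pi$ with $\Pi$ and $U\Pi U^*$ asymptotically free. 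Its density is Wachter/free-Jacobi,
\[
\frac{\sqrt{(x-a)(b-x)}}{2\pi r\,x(1-x)}\ \text{ on } [a,b],\qquad a,b=\bigl(\sqrt{r(1-r)}\mp\sqrt{r(1-r)}\bigr)^2,
\]
that is, $[0,4r(1-r)]$, plus an atom at $1$ of mass $\max(0,2r-1)/r$. We would obtain the moments $\E\Tr W^\ell$, or equivalently the resolvent $G(z)=\lim\frac1m\E\Tr(z-W)^{-1}$, from the free-probability $S$-transform or from known resolvent results for truncated unitaries. The $|1-2r|$ appearing in \eqref{eqn:pagecurve-s2} should come out of the square-root branch $\sqrt{(z-1+2r)^2-\dots}$ in $G$, together with the atom at $1$ when $r>1/2$.

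The final step computes $\int\log(\cosh^2(2s)-\sinh^2(2s)x)\,d\mu(x)$. We would write it as $\int_{z_0}^\infty\bigl(G(z)-1/z\bigr)\,dz$-type integrals of the resolvent, evaluated at $z_0=\coth^2(2s)$. Concretely, we differentiate in $s$, use $\frac{d}{ds}\log(\cosh^2-\sinh^2x)$ to express the derivative via $G(\coth^2(2s))$, and integrate back in $s$ from $s=0$, where the entropy vanishes. Simplifying should produce the two logarithms in \eqref{eqn:pagecurve-s2}. The symmetry $r\leftrightarrow1-r$ gives a consistency check, and comparing the small-$s$ expansion against the series in \cite{iosue2023page,youm2025average} gives another.

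We expect the main obstacle to be justifying the limiting law. The COE corner is not exactly a truncated Haar unitary, so we must confirm that $\frac1m\E\Tr W^\ell$ converges to the free-Jacobi moments, uniformly enough to sum the series when $\tanh(2s)<1$; since $\|W\|\le1$, the series is dominated termwise. We would try Weingarten asymptotics for $\E\Tr(\Pi UU^T\Pi\bar U\bar U^T)^\ell$ at leading order, where only non-crossing pairings survive. The closing antiderivative algebra is routine but error-prone.
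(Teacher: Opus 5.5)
Your outline is correct, and its first and last steps match the paper's. The paper also starts from $S_2(U)=k\log\cosh(2s)+\frac12\Tr\log(1-\tanh^2(2s)W)$, which is your $\frac12\Tr\log(\cosh^2(2s)-\sinh^2(2s)W)$. It also finishes by differentiating, in $x=\tanh^2(2s)$ rather than $s$, so that the derivative is the Stieltjes transform evaluated at $1/x=\coth^2(2s)$, and then integrating back from $x=0$; see \eqref{eqn:deriv-beta}.

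The genuine difference is how the limiting law of $W$ is obtained. The paper uses the exact finite-$m$ joint density of the squared singular values of a $k\times k$ COE corner. This is a $\beta=1$ Jacobi ensemble with weight $(1-\lambda)^{(m-2k-1)/2}$, and the paper then imports the large-$k$ resolvent expansion of \cite{forrester2017large}. Its leading term $m_0$ in \eqref{eqn:m} is exactly the Stieltjes transform of your free Jacobi law, so your identification is right. As a consistency check, the CUE corner has weight $(1-\lambda)^{m-2k}$ with $\beta=2$, and in both cases the ratio of exponent to $\beta k$ tends to $(1-2r)/(2r)$, so the equilibrium measures agree.

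What each route buys:
\begin{itemize}
\item The paper's route needs no freeness argument for the COE, and its $1/k$ term $m_1$ gives the constant correction $\lambda_2$ at no extra cost.
\item Your route gives only the leading order, but it explains the answer. Two free projections of trace $r$ make the edge $4r(1-r)$, the atom at $1$ for $r>1/2$, and the $r\leftrightarrow 1-r$ symmetry transparent. It would also transfer to any ensemble where the corresponding asymptotic freeness is known.
\end{itemize}

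The step you flag is where the real work lies. You can close it without a raw degree-$2\ell$ Weingarten expansion. Set $X:=U^*\Pi U$. Then $W=\Pi UU^T\Pi\bar UU^*\Pi$ and $U^T\Pi\bar U=\bar X=X^T$, so $\Tr W^\ell=\Tr(XX^T)^\ell$. It then suffices that a unitarily invariant ensemble is asymptotically free from its transpose, which is a theorem of Mingo and Popa.

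More cheaply still, you already start from the series of \cite{iosue2023page}, and the leading moments $\lim_m\frac1m\E\Tr W^\ell=r^{\ell+1}C_\ell\,{}_2F_1(1-\ell,\ell;\ell+2;r)$ are available there. For example, at $\ell=2$ this gives $2r^3-r^4$, which is the free value of $\tau(pqpq)$. The paper's second proof simply sums these moments in closed form.
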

\begin{rmk}\label{rmk:constant}
Along sequences with $k=rm\in\Z$, the next leading order (constant) term $\lambda_2(s,r)$ in the expansion $\E S_2(U)=m\beta_2(s,r)-\lambda_2(s,r)+o(1)$, where $\beta_2(s,r)$ is \eqref{eqn:pagecurve-s2}, is known to be \cite{iosue2023page,breakpoint}
\begin{align}\label{eqn:lambda2}
\lambda_2(s,r)&=-\frac18\log(1-4r(1-r)\tanh^2(2s)).
\end{align}
As referenced in \cite{iosue2023page}, this constant order term \eqref{eqn:lambda2} was derived earlier using breakpoint graphs and combinatorial methods, now written in \cite{breakpoint}.
As we will see, the Jacobi resolvent method we use to prove \eqref{eqn:pagecurve-s2} gives an alternate derivation of the constant order term \eqref{eqn:lambda2}.
\end{rmk}

\begin{proof}
For $r \in [0,1]$ rational, consider sequences of $m$ and subsystem sizes such that $k=rm\in\Z$. 
By Haar invariance, we may take the size $k$ subsystem $\Gamma$ to be the first $k$ modes $\{1,\ldots,k\}$. Additionally, by symmetry $r\leftrightarrow 1-r$, we may assume $k\le m/2$.
The infinite series expansion of $\Tr\log\sigma(U)$ from \cite[(A33)]{iosue2023page} provides the expressions
\begin{align}
S_2(U)&=k\log\cosh(2s)+\frac12\Tr\log(1-\tanh^2(2s)W)\label{eqn:s2-trlog}\\
&=\sum_{\ell=1}^\infty\frac{\tanh^{2\ell}(2s)}{2\ell}(k-\Tr W^\ell),\label{eqn:s2-series}
\end{align}
where $W=\Pi UU^T\Pi\bar U\bar U^T\Pi=:F^\dagger F$, with $\Pi$ the projection onto the subsystem $\Gamma$, and $F=\Pi \bar U\bar U^T\Pi$. 
To evaluate $\E\Tr W^\ell$, we observe that $F$, viewed as a $k\times k$ matrix, is distributed as a $k\times k$ submatrix of an $m\times m$ COE (circular orthogonal ensemble) random matrix (since COE random matrices are distributed as $UU^T$ for $U$ Haar random unitary). The eigenvalues $\lambda_1,\ldots,\lambda_k$ of $W$ are the singular values squared of $F$.
The joint eigenvalue density can be obtained from \cite[\S3.8]{forrester2010log}. To this end, decompose
\begin{align*}
UU^T=\begin{pmatrix}r_{(m-k)\times(m-k)}&t^T_{(m-k)\times k}\\t_{k\times(m-k)}&F_{k\times k}\end{pmatrix},
\end{align*}
where the subscripts indicate the matrix sizes, and $k\le m/2$.
The distribution of the singular values $\tilde s_1,\ldots,\tilde s_k$ of $t$, see e.g. \cite[Proposition 3.8.1]{forrester2010log} with $\beta=1$, is proportional to
\begin{align*}
\prod_{j=1}^k\tilde s_j^{m-2k}\prod_{1\le j<\ell\le k}|\tilde s_j^2-\tilde s_\ell^2|.
\end{align*}
The singular values $s_j$ of $F$ satisfy $s_j^2=1-\tilde s_j^2$.
Performing a change of variables $\lambda_j=1-\tilde s_j^2$ to the eigenvalues of $W$ then gives a Jacobi ensemble with $\beta=1$ and joint density
\begin{align}\label{eqn:jacobi}
p(\lambda_1,\ldots,\lambda_k)&\propto\prod_{j=1}^k(1-\lambda_j)^{\frac{m-2k-1}{2}}\prod_{1\le j<\ell\le k}|\lambda_j-\lambda_\ell|.
\end{align}

Evaluating $\E\Tr W^\ell$ is then asking for $\E\sum_{j=1}^k\lambda_j^\ell$, where the expectation is over the Jacobi ensemble \eqref{eqn:jacobi}. This can be done asymptotically in $m$ using random matrix resolvent methods/Stieltjes transform.
The resolvent (or unnormalized Stieltjes transform of the empirical eigenvalue density) is
\begin{align}\label{eqn:resolvent}
R_k(z)&:=\E\sum_{j=1}^k\frac1{z-\lambda_j}=\sum_{\ell=0}^\infty\frac{\E\Tr W^\ell}{z^{\ell+1}},
\end{align}
where the first expectation value is taken over the Jacobi ensemble, and the second equality is as $z\to\infty$. Therefore, knowing the asymptotic expansion of $R_k(z)$ in $k$ and matching powers of $z$ will give us the asymptotic expansions of $\E\Tr W^\ell$ in $k=rm$.
The linear and constant terms in the $k\to\infty$ expansion of $R_k(z)$ for the Jacobi ensembles are computed in \cite{forrester2017large} and \cite{mezzadri2012moments}. In general, the Jacobi $\beta$-ensemble with parameters $\alpha_1$ and $\alpha_2$ is given by the probability density
\begin{align*}
\frac1{C_k}\prod_{j=1}^k\lambda_j^{\alpha_1}(1-\lambda_j)^{\alpha_2}\oneb_{0<\lambda_j<1}\prod_{1\le j<\ell\le k}|\lambda_j-\lambda_\ell|^\beta.
\end{align*}
The resolvent $R_k^{\alpha_1,\alpha_2}$ can be expanded as a series in $1/k$ as $k\to\infty$ \cite{borot2013asymptotic},
\begin{align}\label{eqn:rk-series}
R^{\alpha_1,\alpha_2}_k(z)&=\sum_{\ell=0}^\infty\frac{2^\ell a_\ell(z)}{k^{\ell-1}}.
\end{align}
\begin{prop}[special case of {\cite[Proposition 4.7]{forrester2017large}}]\label{prop:wterms}
Consider the Jacobi $\beta$-ensemble with $\beta=1$ and parameters $\alpha_1=\Theta(k)$ and $\alpha_2=0$.
Write $\alpha_1=\tilde\alpha_1k/2+\delta_1$ with $\tilde\alpha_1,\delta_1=O(1)$, and let $c_-=\tilde\alpha_1^2/(\tilde\alpha_1+2)^2$. Then the first two coefficients $a_\ell(z)$ in \eqref{eqn:rk-series} are
\begin{align*}
a_0(z)&=-\frac{\tilde\alpha_1}{2z}+\frac{\tilde\alpha_1+2}{2z(z-1)}\sqrt{(z-1)(z-c_-)},\\
a_1(z)&=-\frac{1-c_-}{8(z-1)(z-c_-)}-(1/2+\delta_1)\left(\frac{1}{2z}+\frac{\tilde\alpha_1-(\tilde\alpha_1+2)z}{2(\tilde\alpha_1+2)z\sqrt{(z-1)(z-c_-)}}\right).
\end{align*}
\end{prop}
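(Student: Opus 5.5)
The statement is a specialization of \cite[Proposition 4.7]{forrester2017large}, so the plan is not to rederive the $1/k$ expansion but to substitute our parameters into the general formulas there and simplify. We fix $\beta=1$, $\alpha_2=0$, $\alpha_1=\tilde\alpha_1k/2+\delta_1$, and track how the general $\beta$-dependent prefactors specialize.

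\textbf{Leading term $a_0$.} We first check it independently, as a consistency test on the normalization $2^\ell a_\ell/k^{\ell-1}$ in \eqref{eqn:rk-series}. At leading order $R_k(z)\approx 2a_0(z)\,k$, so $2a_0$ must be the Stieltjes transform of the limiting spectral measure of the Jacobi ensemble. For $\alpha_1=\Theta(k)$ and $\alpha_2=O(1)$ this is a Wachter-type law. The loop (Schwinger--Dyson) equation at leading order reads
\[
\tfrac{\beta}{2}\,G(z)^2 + V'(z)\,G(z) - P(z)=0,
\]
where $G$ is the normalized resolvent, $V'(\lambda)=-\tilde\alpha_1/(2\lambda)$ is the rescaled potential, and $P(z)$ is a rational function whose pole structure at $0$ and $1$ we fix from the normalization $G(z)\sim1/z$.

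Solving this quadratic gives a square root $\sqrt{(z-1)(z-c_-)}$. The edge $c_-=\tilde\alpha_1^2/(\tilde\alpha_1+2)^2$ is determined by requiring that the square-root expression have no pole at $z=0$. The hard edge at $1$ appears because $\alpha_2=0$ is $O(1)$, so the equilibrium density has a $1/\sqrt{1-\lambda}$ singularity there. We then check that the displayed $a_0$ satisfies $a_0(z)\sim 1/(2z)$ as $z\to\infty$ and is regular at $z=0$ on the correct branch.

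\textbf{Correction $a_1$.} We take Forrester--Rahman--Witte's formula for the first correction. It consists of two pieces:
\begin{itemize}
\item a universal term proportional to $(1-2/\beta)$ times a derivative-type term in $a_0$;
\item a term coming from the $O(1)$ parts of the exponents, here $\delta_1$ together with the $\beta$-dependent shift $(\beta/2-1)$ attached to $\lambda^{\alpha_1}$.
\end{itemize}
Setting $\beta=1$ makes the coefficient of the second piece $1/2+\delta_1$. The edge-dependent term reduces to $-(1-c_-)/\bigl(8(z-1)(z-c_-)\bigr)$, which is the standard $\beta\neq2$ correction for a soft edge at $c_-$ and a hard edge at $1$.

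The remaining work is algebra. We rewrite the derivative of $\sqrt{(z-1)(z-c_-)}$ divided by $z$ in the form $\bigl(\tilde\alpha_1-(\tilde\alpha_1+2)z\bigr)/\bigl(2(\tilde\alpha_1+2)z\sqrt{\cdots}\bigr)$, using $\sqrt{c_-}=\tilde\alpha_1/(\tilde\alpha_1+2)$.

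\textbf{Main obstacle.} The difficulty is matching conventions. The scaling of $\alpha_1$ (whether $\tilde\alpha_1 k/2$ or $\tilde\alpha_1 k\beta/2$) and the factor $2^\ell$ in \eqref{eqn:rk-series} both differ between references. As a sanity check we will compare $\E\Tr W$ computed from the expansion with the exact Jacobi mean. In particular, the coefficient of $1/z^2$ must reproduce the exact first moment, which is available from Selberg/Aomoto integrals, to both $O(k)$ and $O(1)$ order. We will also check that $a_1(z)=O(1/z^2)$ at infinity, since the normalization $\E\Tr W^0=k$ is exact and so $a_1$ has no $1/z$ term.
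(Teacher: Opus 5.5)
Your route, specializing \cite[Proposition 4.7]{forrester2017large} to $\beta=1$, $\alpha_2=0$, is exactly what the paper does; the paper gives no argument beyond the citation. However, two concrete steps in your plan are wrong and would fail as written.

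First, the normalization. In \eqref{eqn:rk-series} the $\ell=0$ term is $2^0a_0(z)\,k=ka_0(z)$. So $R_k(z)=ka_0(z)+2a_1(z)+O(1/k)$, and $a_0$ itself, not $2a_0$, is the limiting Stieltjes transform. Indeed the displayed $a_0$ expands as $a_0(z)=\frac1z+\frac{\tilde\alpha_1+1}{\tilde\alpha_1+2}\,\frac1{z^2}+O(z^{-3})$. Your planned check ``$a_0\sim 1/(2z)$'' would therefore report a spurious factor of $2$. The $2^\ell$ only matters from $\ell=1$ on, which is why the paper's $m_1$ equals $2a_1$. The Aomoto check you propose does work once this is corrected: for $\beta=1$, $\alpha_2=0$,
\[
\E\sum_{j=1}^k\lambda_j=k\,\frac{\alpha_1+(k+1)/2}{\alpha_1+k+1}=k\,\frac{\tilde\alpha_1+1}{\tilde\alpha_1+2}+\frac{2\delta_1-\tilde\alpha_1}{(\tilde\alpha_1+2)^2}+O(1/k),
\]
and these are exactly the $z^{-2}$ coefficients of $ka_0$ and $2a_1$. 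Your other checks are fine: no pole at $z=0$ on the branch where the root equals $-\sqrt{c_-}$ there, and $a_1=O(z^{-2})$.

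Second, the algebra for the $(\tfrac12+\delta_1)$ piece is false. The expression $z^{-1}\partial_z\sqrt{(z-1)(z-c_-)}=\frac{2z-1-c_-}{2z\sqrt{(z-1)(z-c_-)}}$ has a numerator vanishing at $(1+c_-)/2$. By contrast, $\tilde\alpha_1-(\tilde\alpha_1+2)z$ vanishes at $\sqrt{c_-}=\tilde\alpha_1/(\tilde\alpha_1+2)$, and these differ by $(1-\sqrt{c_-})^2/2>0$. Nor does it account for the separate $\frac1{2z}$. The bracket is a derivative in the parameter, not in $z$:
\[
\frac1{2z}+\frac{\tilde\alpha_1-(\tilde\alpha_1+2)z}{2(\tilde\alpha_1+2)z\sqrt{(z-1)(z-c_-)}}=-\partial_{\tilde\alpha_1}a_0(z).
\]
This follows from $\partial_{\tilde\alpha_1}c_-=4\tilde\alpha_1/(\tilde\alpha_1+2)^3$ and $c_-+2\tilde\alpha_1/(\tilde\alpha_1+2)^2=\sqrt{c_-}$. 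Hence
\[
a_1=-\frac{1-c_-}{8(z-1)(z-c_-)}+\bigl(\tfrac12+\delta_1\bigr)\,\partial_{\tilde\alpha_1}a_0,
\]
which is the structure to expect. Since $\alpha_1=\frac k2(\tilde\alpha_1+2\delta_1/k)$, an $O(1)$ shift of the exponent changes $ka_0$ by $2\delta_1\partial_{\tilde\alpha_1}a_0+O(1/k)$, i.e.\ it contributes $\delta_1\partial_{\tilde\alpha_1}a_0$ to $a_1$. The $\beta=1$ effect is the extra $\tfrac12$ in that coefficient, together with the edge term $\frac18\bigl(\frac1{z-c_-}-\frac1{z-1}\bigr)$.

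With these two corrections your plan reduces to the paper's direct substitution into \cite{forrester2017large}.
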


For the Jacobi ensemble \eqref{eqn:jacobi}, we have $\alpha_1=0$ and $\alpha_2=(\frac{1}{2r}-1)k-\frac12$; this is dual to $\alpha_1=(\frac{1}{2r}-1)k-\frac12$ and $\alpha_2=0$ under the change of variables $\lambda_j\mapsto 1-\lambda_j$. The map $\lambda_j\mapsto1-\lambda_j$ in $R_k(z)$ is equivalent to sending $R_k(z)\mapsto -R_k(1-z)$. 
Applying this map with $\tilde\alpha_1=\frac1r-2$, $\delta_1=-1/2$, and $c_-=(1-2r)^2$ gives
\begin{align}
a_0(z)&=\frac{\frac1r-2}{2(1-z)}-\frac{\sqrt{-z(1-z-(1-2r)^2)}}{2r(1-z)z}=\frac{-(1-2r)+\sqrt{1-4r(1-r)/z}}{2r(z-1)},\\
a_1(z)&=-\frac{1-(1-2r)^2}{8z(1-z-(1-2r)^2)}=\frac{R}{2z(z-4R)},
\end{align}
where $R:=r(1-r)$. (When $r=1/2$ we may also use \cite[Proposition 4.6]{forrester2017large}.)
This gives the resolvent expansion
\begin{align}
R_k(z)&=km_0(z)+m_1(z)+O(1/k),
\end{align}
with
\begin{align}\label{eqn:m}
m_0(z)&=\frac{-(1-2r)+\sqrt{1-4r(1-r)/z}}{2r(z-1)},\quad m_1(z)=\frac14\left(\frac1{z-4R}-\frac1z\right).
\end{align}

For the Page curve formula \eqref{eqn:pagecurve-s2}, from \eqref{eqn:s2-trlog}, we want to compute, for $x:=\tanh^2(2s)$,
\begin{align*}
\beta_2(s,r)&=r\log\cosh(2s)+\lim_{k\to\infty}\frac rk\E\left[\frac1{2}\Tr\log(1-xW)\right]\\
&=r\log\cosh(2s)+\frac r2\int_0^1 \log(1-x\lambda)\,d\mu_r(\lambda),\numberthis
\end{align*}
where $\mu_r$ is the limiting empirical eigenvalue distribution $\lim_{k\to\infty}\E\frac1k\sum_{j=1}^k\delta_{\lambda_j}$, whose Stieltjes transform is $\int\frac{1}{z-\lambda}\,d\mu_r(\lambda)=m_0(z)$.
We could use Stieltjes inversion to determine $\mu_r$ and integrate, but it will be a bit faster to differentiate instead. Since $\log\cosh(2s)=-\frac12\log(1-x)$, then
\begin{align*}
\partial_x\beta_2&=\frac{r}{2(1-x)}-\frac r2\int_0^1\frac{\lambda}{1-x\lambda}\,d\mu_r(\lambda)\\
&=\frac{r}{2(1-x)}-\frac{r}{2x}\int_0^1\frac{1}{1-x\lambda}\,d\mu_r(\lambda)+\frac{r}{2x}\int_0^11\,d\mu_r(\lambda).\numberthis
\end{align*}
Since $\int_0^1\frac{1}{1-x\lambda}\,d\mu_r(\lambda)=\frac1xm_0(1/x)$, we have
\begin{align}\label{eqn:deriv-beta}
\partial_x\beta_2&=\frac{r}{2(1-x)}-\frac{r}{2x}\left[\frac 1xm_0(1/x)-1\right]
=\frac{-1+\sqrt{1-4r(1-r)x}}{4x(x-1)}.
\end{align}
This has a closed-form antiderivative (set for example $u=\sqrt{1-4r(1-r)x}$ or use Mathematica)
\begin{multline}
\frac14\bigg[2(r-1)\log\Big(1-2r+\sqrt{1-4r(1-r)x}\Big)+2\log\Big(1+\sqrt{1-4r(1-r)x}\Big)\\
-2r\log\left(-1+2r+\sqrt{1-4r(1-r)x}\right)\bigg],
\end{multline}
so integrating \eqref{eqn:deriv-beta} from $0$ to $x$ gives
\begin{align}\label{eqn:limit3}
\beta_2(s,r)&=\frac12(2r-1)\log\left(\frac{1-2r+\sqrt{1-4r(1-r)x}}{2-2r}\right)+\frac12\log\left(\frac{1+\sqrt{1-4r(1-r)x}}{2}\right)-\frac{r}{2}\log(1-x).
\end{align}
To obtain the form \eqref{eqn:pagecurve-s2}, recall $r\le1/2$ and $x=\tanh^2(2s)$, and use that $1=\sech^2(2s)+\tanh^2(2s)$ to note that 
$\sqrt{1-4r(1-r)x}=\sqrt{\sech^2(2s)+|1-2r|^2\tanh^2(2s)}=\sech(2s)\sqrt{1+|1-2r|^2\sinh^2(2s)}$.

Next, we derive the constant order term \eqref{eqn:lambda2}. Expanding $m_1$ in \eqref{eqn:m} as a power series in $1/z$, we have
\begin{align}
m_1(z)=\frac14\left(\frac1{z-4R}-\frac1z\right)&=\sum_{\ell=1}^\infty\frac{4^{\ell-1}R^\ell}{z^{\ell+1}}.
\end{align}
Matching powers of $z$ with \eqref{eqn:resolvent} and \eqref{eqn:rk-series} gives
\begin{align}
\E\Tr W^\ell&=mf_\ell(r)+4^{\ell-1}[r(1-r)]^\ell+o(1),
\end{align}
and so
\begin{align*}
\lambda_2(s,r)&=\sum_{\ell=1}^\infty\frac{\tanh^{2\ell}(2s)}{2\ell}4^{\ell-1}[r(1-r)]^\ell\\
&=-\frac18\log(1-4r(1-r)\tanh^2(2s)),\numberthis
\end{align*}
as desired.
\end{proof}

\begin{proof}[Proof (alternate proof of \eqref{eqn:pagecurve-s2} in Theorem~\ref{thm:s2})]
\cite[Theorem 2]{iosue2023page} gives the infinite series expansion
\begin{align*}
\lim_{m\to\infty}\frac1m\E S_2(U)&=r\log\cosh(2s)-\sum_{\ell=1}^\infty r^{\ell+1}\tanh^{2\ell}(2s)\frac{C_\ell}{2\ell}{}_2F_1(1-\ell,\ell;\ell+2;r)\\
&=\sum_{\ell=1}^\infty \frac{\tanh^{2\ell}(2s)}{2\ell}[r-r^{\ell+1}C_\ell\,{}_2F_1(1-\ell,\ell;\ell+2;r)],\numberthis
\end{align*}
where $C_\ell:=\frac1{\ell+1}\binom{2\ell}{\ell}$ is the $\ell$th Catalan number, and ${}_2F_1$ is the hypergeometric function. We will evaluate this series.

Let $G_\ell(r):=r-f_\ell(r)$ for $f_\ell(r)=r^{\ell+1}C_\ell\,{}_2F_1(1-\ell,\ell;\ell+2;r)$. Due to \cite[Lemmas C.1,C.2]{iosue2023page}, we know that each $f_\ell$ is a polynomial in $r$ with only terms of degree in $\{\ell+1,\cdots,2\ell\}$. (An alternate way to see this now that we start with the explicit formula for $f_\ell$ is to observe that since the first entry is an integer $1-\ell\le0$, the hypergeometric function is a polynomial of degree at most $\ell-1$.)
This implies that $G_\ell(r)=r+O(r^{\ell+1})$ as $r\to0$.
Also, as shown in the proof of \cite[Lemma C.2]{iosue2023page}, each individual $G_\ell(r)$ must be symmetric under $r\mapsto 1-r$ (this follows physically from the symmetry of the Page curve for pure states). 

Using these two properties, we can determine the polynomial $G_\ell$. By the symmetry property, $G_\ell$ must be a polynomial in $R:=r(1-r)$ of degree at most $\ell$; say $G_\ell(r)=P_\ell(R)$. Since $P_\ell(R)=G_\ell(r)=r+O(r^{\ell+1})$, we invert $R=r-r^2$  near $R=0$ to see
\begin{align}\label{eqn:r-catalan}
r=\frac{1-\sqrt{1-4R}}{2}&=\sum_{j=1}^\infty C_{j-1}R^j,
\end{align}
which is essentially the Catalan generating function. Then the only degree $\ell$ polynomial in $R$ that agrees with $r$ to order $O(r^\ell)=O(R^\ell)$ is $\sum_{j=1}^\ell C_{j-1}R^j$, so this must be $P_\ell(R)$. So we get
\begin{align}\label{eqn:Gl}
G_\ell(r)=\sum_{j=1}^\ell C_{j-1}[r(1-r)]^j.
\end{align}

From \eqref{eqn:Gl}, we see
\begin{align}\label{eqn:Gsum}
\beta_2(s,r):=\lim_{m\to\infty}\frac1m\E S_2(U)&=\sum_{\ell=1}^\infty\frac{\tanh^{2\ell}(2s)}{2\ell}\sum_{j=1}^\ell C_{j-1}[r(1-r)]^j.
\end{align}
Note that $0\le G_\ell(r)\le 1/2$ using \eqref{eqn:r-catalan}, so \eqref{eqn:Gsum} is uniformly convergent for $x:=\tanh^{2}(2s)<1$, and we can differentiate term-by-term to obtain
\begin{align*}
\partial_x\beta_2=\frac12\sum_{\ell=1}^\infty x^{\ell-1}\sum_{j=1}^\ell C_{j-1}R^j
&=\frac12\sum_{j=1}^\infty C_{j-1}R^j\sum_{\ell=j}^\infty x^{\ell-1}\\
&=\frac12\sum_{j=1}^\infty C_{j-1}\frac{(Rx)^{j}x^{-1}}{1-x}
=\frac{1-\sqrt{1-4xR}}{4x(1-x)},\numberthis
\end{align*}
using the Catalan generating function \eqref{eqn:r-catalan} for the last equality.
This is \eqref{eqn:deriv-beta}, which gives the closed form \eqref{eqn:pagecurve-s2}.
\end{proof}

The following analogous results for the R\'enyi-$\alpha$ entropy for $\alpha\ge2$ an integer will follow from the $\alpha=2$ case and \cite[Theorem 4]{youm2025average}. The von Neumann entropy ($\alpha=1$) infinite series expression in \cite{youm2025average} could also likely be simplified, though perhaps not to the same extent as the R\'enyi-$\alpha$ Page curve below.
\begin{cor}[R\'enyi-$\alpha$ entropy Page curves]\label{cor:alpha}
Let
\begin{multline}\label{eqn:fr}
f_r(z):=-\frac{1-|1-2r|}{4}\log(1-z)+\frac12\log\left(\frac{1+\sqrt{1-4r(1-r)z}}{2}\right)\\ 
-\frac{|1-2r|}{2}\log\left(\frac{|1-2r|+\sqrt{1-4r(1-r)z}}{1+|1-2r|}\right).
\end{multline}
For any integer $\alpha\ge2$, the R\'enyi-$\alpha$ Page curve is
\begin{align}\label{eqn:pagecurve-alpha}
\lim_{m\to\infty}\frac1m\E S_\alpha(U)&=\frac{1}{\alpha-1}\left[\sum_{j=1}^{\lfloor\frac{\alpha-1}{2}\rfloor}2f_r(z_j)+\oneb_{\alpha\in2\Z}f_r(\tanh^2(2s))\right],
\end{align}
where
\begin{align}
z_j:=\frac{\sinh^2(2s)}{\cosh^2(2s)+\cot^2(\frac{\pi j}{\alpha})}.
\end{align}
\end{cor}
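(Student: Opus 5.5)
The plan is to reduce each R\'enyi-$\alpha$ entropy to a finite sum of ``R\'enyi-2-type'' log-determinants, each with its own effective parameter $z_j$ in place of $x=\tanh^2(2s)$, and then evaluate each of these with the resolvent computation from the proof of Theorem~\ref{thm:s2}.

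\textbf{Step 1: identify $f_r$ with the R\'enyi-2 Page curve.} First observe that for $r\le1/2$ the function $f_r(z)$ in \eqref{eqn:fr} is exactly the right-hand side of \eqref{eqn:limit3} with $x$ replaced by $z$. Indeed, $\frac{1-|1-2r|}{4}=\frac r2$ in that range. So for $x=\tanh^2(2s)$,
\begin{align*}
f_r(x)=r\log\cosh(2s)+\frac r2\int_0^1\log(1-x\lambda)\,d\mu_r(\lambda),
\end{align*}
where $\mu_r$ is the limiting Jacobi spectral measure of $W$ found in the proof of Theorem~\ref{thm:s2}. Since $m_0$ and the derivative computation \eqref{eqn:deriv-beta} are valid for any $z\in[0,1)$, the same argument gives
\begin{align*}
f_r(z)=-\frac r2\log(1-z)+\frac r2\int_0^1\log(1-z\lambda)\,d\mu_r(\lambda)\quad\text{for all } z\in[0,1).
\end{align*}
The case $r>1/2$ follows from the symmetry $r\leftrightarrow1-r$, which is built into the absolute values in \eqref{eqn:fr}.

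\textbf{Step 2: factor the R\'enyi-$\alpha$ entropy.} Next I would invoke \cite[Theorem 4]{youm2025average}. I expect it, or its derivation, to express $\E S_\alpha(U)$ through $\Tr\rho^\alpha=\prod_i 2^\alpha/[(\nu_i+1)^\alpha-(\nu_i-1)^\alpha]$ over the symplectic eigenvalues $\nu_i$ of the reduced covariance matrix. Factor $(\nu+1)^\alpha-(\nu-1)^\alpha$ over its roots $\nu=i\cot(\pi j/\alpha)$:
\begin{align*}
(\nu+1)^\alpha-(\nu-1)^\alpha=2\alpha\prod_{j=1}^{\lfloor(\alpha-1)/2\rfloor}\bigl(\nu^2+\cot^2(\tfrac{\pi j}{\alpha})\bigr)\cdot\nu^{\oneb_{\alpha\in2\Z}}.
\end{align*}
This turns $\log\Tr\rho^\alpha$ into a sum of terms $\Tr\log(\sigma^2+\cot^2(\pi j/\alpha))$, plus one term $\frac12\Tr\log\sigma^2$ when $\alpha$ is even. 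The even term is precisely $2S_2$, which explains the indicator term in \eqref{eqn:pagecurve-alpha}.

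\textbf{Step 3: rewrite each factor.} Using the same block structure as in \cite[(A33)]{iosue2023page}, namely $\sigma^2$ expressed through $\cosh^2(2s)$, $\sinh^2(2s)$ and $W$, each factor $\det(\sigma^2+c^2)$ with $c=\cot(\pi j/\alpha)$ should become $\det\bigl((\cosh^2(2s)+c^2)(1-z_jW)\bigr)$ up to a factor $(1-z_j)$-type normalization. Here
\begin{align*}
z_j=\frac{\sinh^2(2s)}{\cosh^2(2s)+c^2},
\end{align*}
which indeed reduces to $\tanh^2(2s)$ at $c=0$. Dividing by $m$ and letting $m\to\infty$, each such term then has limit $2f_r(z_j)$ by Step 1, since the linear-in-$k$ constants must match $-\frac r2\log(1-z_j)$. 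The prefactor $\frac1{\alpha-1}$ comes from the definition of $S_\alpha$.

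\textbf{Main obstacle.} The hard step is Step 3: verifying that the normalization constants, namely the $2\alpha$ and $2^\alpha$ prefactors and the $\log(\cosh^2(2s)+c^2)$ pieces, combine exactly into $-\frac r2\log(1-z_j)$ for each $j$. I would check this by noting that $1-z_j=\frac{1+c^2}{\cosh^2(2s)+c^2}$, and by confirming that the identity holds at $s=0$, where all entropies vanish, so that no stray constant survives. If \cite[Theorem 4]{youm2025average} is instead given as a series in $\E\Tr W^\ell$ with explicit coefficients, I would expect those coefficients to sum to $\sum_j z_j^\ell$. In that case I would apply the Catalan resummation from the alternate proof of Theorem~\ref{thm:s2} term by term in $z_j$.
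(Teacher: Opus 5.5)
Your route is correct in substance and genuinely different from the paper's. The paper never uses symplectic eigenvalues for the leading term. It quotes \cite[Theorem 4]{youm2025average}, which writes $\E S_\alpha(U)$ as $\frac{\oneb_{\alpha\in2\Z}}{\alpha-1}\E S_2(U)$ plus $\frac1{\alpha-1}\sum_j\sum_{\ell\ge1}\frac{z_j^\ell}{\ell}(mG_\ell(r)-H_\ell(r)+o(1))$, with the same $G_\ell$ as for R\'enyi-2. It then uses $\sum_\ell\frac{z^\ell}{2\ell}G_\ell(r)=f_r(z)$, which the proof of Theorem~\ref{thm:s2} established as a function of $x=\tanh^2(2s)\in[0,1)$. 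That is your fallback in the last paragraph.

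Your main route instead re-derives that theorem. You start from $\Tr\rho^\alpha=\prod_i2^\alpha/[(\nu_i+1)^\alpha-(\nu_i-1)^\alpha]$ and factor over the roots $\pm i\cot(\pi j/\alpha)$; the paper uses this factorization only afterwards, to simplify $\lambda_\alpha$. This gives, for every fixed $U$, the exact identity $-\log\Tr\rho(U)^\alpha=\sum_j[-k\log(1-z_j)+\Tr\log(1-z_jW)]+\oneb_{\alpha\in2\Z}S_2(U)$, and your Step 1 finishes. Your version is self-contained, explains where the $z_j$ and cotangents come from, and gives $\lambda_\alpha$ from $m_1$ by the same mechanism. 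The paper's version is shorter given the citation.

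The normalization you flag as the main obstacle is immediate. Write $c_j:=\cot(\pi j/\alpha)$. Evaluating the factorization at $\nu=1$ gives $2\alpha\prod_j(1+c_j^2)=2^\alpha$. Combined with $1-z_j=(1+c_j^2)/(\cosh^2(2s)+c_j^2)$, the constants collapse to $-k\sum_j\log(1-z_j)$, as your $s=0$ check suggests.

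One step is wrong as written and needs repair. The eigenvalues of the covariance matrix $\sigma$ are not its symplectic eigenvalues: for one pure squeezed mode, $\sigma=\operatorname{diag}(e^{2s},e^{-2s})$ but $\nu=1$. So $\det(\sigma^2+c^2)\ne\prod_i(\nu_i^2+c^2)^2$ when $c\neq0$; only the $c=0$ factor is safe, since the determinant is symplectically invariant. For the same reason, \cite[(A33)]{iosue2023page} is a determinant identity. It gives $\prod_i\nu_i^2=\det(\cosh^2(2s)-\sinh^2(2s)W)$ but does not determine $\prod_i(\nu_i^2+c^2)$.

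What you need is the spectral statement that the $\nu_i^2$ are exactly $\cosh^2(2s)-\sinh^2(2s)\lambda_i(W)$. A clean way to get it:
\begin{enumerate}
\item Write $\sigma=SDS^T$ with $S$ symplectic; then $\prod_i(\nu_i^2+c^2)=\det(\sigma+c\Omega)$.
\item In the $(a,a^\dagger)$ basis, $\sigma+c\Omega$ has scalar diagonal blocks $(\cosh(2s)\mp ic)I$. This uses that all $m$ modes are equally squeezed, so $\langle a^\dagger a\rangle$ is a multiple of the identity.
\item Its off-diagonal blocks are $\sinh(2s)$ times $(UU^T)_{\Gamma\Gamma}$ and its conjugate.
\item A Schur complement then gives $\det((\cosh^2(2s)+c^2)I-\sinh^2(2s)W)$.
\end{enumerate}

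Finally, the even factor $\nu^{\oneb_{\alpha\in2\Z}}$ contributes $\sum_i\log\nu_i=S_2$ to $-\log\Tr\rho^\alpha$, not $2S_2$. With each $j$ contributing $\sum_i\log(\nu_i^2+c_j^2)$, the bookkeeping then matches \eqref{eqn:pagecurve-alpha} exactly. The case $r>1/2$ follows by purity.
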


\begin{rmk}
Similar to Remark~\ref{rmk:constant}, we can derive the next subleading (constant) order term $\lambda_\alpha(s,r)$ in the expansion $\E S_\alpha(U)=m\beta_\alpha(s,r)-\lambda_\alpha(s,r)+o(1)$, where $\beta_\alpha(s,r)$ is \eqref{eqn:pagecurve-alpha}.
Letting $h_\alpha(\nu):=\frac{1}{\alpha-1}\log\left((\nu+1)^\alpha-(\nu-1)^\alpha\right)$, then
\begin{align}\label{eqn:lambda-alpha}
\lambda_\alpha(s,r)&=\frac1{4}\left[h_\alpha(\cosh(2s))-h_\alpha\left(\sqrt{1+|1-2r|^2\sinh^2(2s)}\right)\right].
\end{align}
\end{rmk}

\begin{proof}
Note that \eqref{eqn:fr} is just \eqref{eqn:limit3} written for any $0\le r\le1$. Theorem 4 of \cite{youm2025average} gives the formula
\begin{align}
\E S_\alpha(U)&=\frac{\oneb_{\alpha\in2\Z}}{\alpha-1}\E S_2(U)+\frac1{\alpha-1}\sum_{j=1}^{\lfloor\frac{\alpha-1}{2}\rfloor}\sum_{\ell=1}^\infty \frac{\sinh^{2\ell}(2s)}{\ell(\cosh^2(2s)+\cot^2\frac{\pi j}{\alpha})^\ell}(mG_\ell(r)-H_\ell(r)+o(1)),
\end{align}
where $G_\ell(r)=r-r^{\ell+1}C_\ell{}_2F_1(1-\ell,\ell,2+\ell,r)$ and $H_\ell(r)=4^{\ell-1}(r(1-r))^\ell$ are same as in the R\'enyi-2 definitions. Equation~\eqref{eqn:pagecurve-alpha} for $\beta_\alpha(s,r)$ then immediately follows from the R\'enyi-2 calculations for Theorem~\ref{thm:s2}.

For the constant term \eqref{eqn:lambda-alpha}, let $L_r(z):=\sum_{\ell=1}^\infty\frac{z^\ell}{2\ell}H_\ell(r)=-\frac18\log(1-4zR)$, where $R=r(1-r)$. Then
\begin{align}\label{eqn:lambda-L}
\lambda_\alpha(s,r)&=\frac{1}{\alpha-1}\left[\sum_{j=1}^{\lfloor\frac{\alpha-1}{2}\rfloor}2L_r(z_j)+\oneb_{\alpha\in2\Z}L_r(\tanh^2(2s))\right].
\end{align}
To simplify this, let $\Delta_{r,s}:=\sqrt{1+|1-2r|^2\sinh^2(2s)}=\sqrt{\cosh^2(2s)-4R\sinh^2(2s)}$, and note that
\begin{align}
1-4z_jR&=1-\frac{4R\sinh^2(2s)}{\cosh^2(2s)+\cot^2(\frac{\pi j}{\alpha})}
=\frac{\Delta_{r,s}^2+\cot^2(\frac{\pi j}{\alpha})}{\cosh^2(2s)+\cot^2(\frac{\pi j}{\alpha})}.
\end{align}
Also, $1-4R\tanh^2(2s)=1-\frac{4R\sinh^2(2s)}{\cosh^2(2s)}=\frac{\Delta_{r,s}^2}{\cosh^2(2s)}$.
From \eqref{eqn:lambda-L}, we then get
\begin{align}\label{eqn:cotangent}
\lambda_\alpha(s,r)&=\frac1{4(\alpha-1)}\left[\log\frac{\prod_{j=1}^{\lfloor\frac{\alpha-1}{2}\rfloor}(\cosh^2(2s)+\cot^2(\frac{\pi j}{\alpha}))}{\prod_{j=1}^{\lfloor\frac{\alpha-1}{2}\rfloor}(\Delta_{r,s}^2+\cot^2(\frac{\pi j}{\alpha}))}+\oneb_{\alpha\in2\Z}\log\frac{\cosh(2s)}{\Delta_{r,s}}\right].
\end{align}

The polynomial $p_\alpha(z):=(z+1)^\alpha-(z-1)^\alpha$ is degree $\alpha-1$ and has roots where $\left(\frac{z+1}{z-1}\right)^\alpha=1$, which gives roots for $j=1,\ldots,\alpha-1$
\begin{align}
\frac{z+1}{z-1}=e^{2\pi ij/\alpha},\quad\Rightarrow\quad z=\frac{e^{2\pi ij/\alpha}+1}{e^{2\pi ij/\alpha}-1}=-i\cot\frac{\pi j}{\alpha}.\label{eqn:cotangent-alpha}
\end{align}
If $\alpha\in2\Z$, this includes the root $z=0$ when $j=\alpha/2$. For $j\ne\alpha/2$, the roots in \eqref{eqn:cotangent-alpha} come in pairs $\pm i\cot(\pi j/\alpha)$.
Thus for some leading coefficient $C_\alpha$ which will cancel in the ratios in \eqref{eqn:cotangent},
\begin{align}
z^{\oneb_{\alpha\in2\Z}}\prod_{j=1}^{\lfloor\frac{\alpha-1}{2}\rfloor}\left(z^2+\cot^2\frac{\pi j}{\alpha}\right)&=C_\alpha\left((z+1)^\alpha-(z-1)^\alpha\right).
\end{align}
Using this in \eqref{eqn:cotangent} gives \eqref{eqn:lambda-alpha}.
\end{proof}

\section{Entanglement with unequal squeezing}\label{sec:unequal}

In Section~\ref{sec:page}, as well as in \cite{iosue2023page,youm2025average,shou2026entanglement}, the initial state has all modes equally squeezed with some squeezing parameter $s>0$. 
In this section, we consider unequal squeezing parameters $\mathbf s=(s_1,\ldots,s_m)$, and show that one can often bound the expected R\'enyi-2 entropy in terms of the R\'enyi-2 entropy for an initial state with all input modes squeezed. Thus the previously mentioned results for equally squeezed input states also extend to give bounds in the unequal squeezing case.
Throughout, by increasing we will mean $x\le y$ implies $f(x)\le f(y)$; in other words, we mean nondecreasing, and do not mean strictly increasing. 
In this section, we work with $m$ modes, with a subsystem $\Gamma$ of size $|\Gamma|=k$ (note, this is the same $k$ as used in Section~\ref{sec:page}, but a different $k$ than used in Section~\ref{sec:m2}).

\begin{thm}[monotonicity]\label{thm:monotonicity}
Let $\sq=(s_1,\ldots,s_m)\in\R^m$, and let $\Gamma\subset\{1,\ldots,m\}$. Denote by $S_2(U,\sq)$ the R\'enyi-$2$ entropy for subsystem $\Gamma$, started from an initial Gaussian state with squeezing parameters $\sq$ and evolved by the linear optical unitary $U$. 
For any random ensemble of $U$ whose distribution is invariant under right multiplication by phase shifts by $\pi/2$, i.e.~multiplication by diagonal matrices with $\{\pm i,\pm1\}$ entries on the diagonal, the expected values
$\E S_2(U,\sq)$ and $\E[S_2(U,\sq)^2]$ are increasing in each $|s_i|$, $i=1,\ldots,m$. More generally, for any convex increasing $\Phi:\R_{\ge0}\to\R$, $\E\Phi(S_2(U,\sq))$ is increasing in each $|s_i|$.
\end{thm}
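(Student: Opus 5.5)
The plan is to show that, for each fixed $U$, the map $\sq\mapsto S_2(U,\sq)$ is convex in each coordinate $s_i$. Invariance of the ensemble under $\pi/2$ phase shifts will then symmetrize this convex function into an even convex function of $s_i$, and an even convex function of one real variable is increasing in $|s_i|$.

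\textbf{Step 1 (structure of $S_2$).} In the quadrature basis $(x_1,p_1,\ldots,x_m,p_m)$, the initial covariance is $D(\sq)=\bigoplus_i\operatorname{diag}(e^{2s_i},e^{-2s_i})$, up to a harmless overall constant. The unitary $U$ acts by an orthogonal symplectic matrix $O_U$, so the reduced covariance is
\begin{align*}
\sigma(U)=M D(\sq) M^T,\qquad M:=\Pi_\Gamma O_U,
\end{align*}
a $2k\times 2m$ matrix times $D$ times its transpose. By Cauchy--Binet,
\begin{align*}
\det\sigma(U)=\sum_{T\subset[2m],\,|T|=2k}\det(M_{:,T})^2\prod_{j\in T}D_{jj}.
\end{align*}
Each product $\prod_{j\in T}D_{jj}$ equals $e^{2\langle \epsilon_T,\sq\rangle}$ for some vector $\epsilon_T\in\{-1,0,1\}^m$: the two quadratures of mode $i$ contribute $e^{2s_i}$, $e^{-2s_i}$, or $e^{2s_i}e^{-2s_i}=1$. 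Hence $\det\sigma(U)=\sum_T c_T e^{2\langle\epsilon_T,\sq\rangle}$ with all $c_T\ge0$. Since $S_2=\frac12\log\det\sigma$ and log-sum-exp with nonnegative weights is convex, $S_2(U,\cdot)$ is jointly convex in $\sq$. Using Cauchy--Binet rather than a rank-one update formula avoids any invertibility issue for the remaining block.

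\textbf{Step 2 (symmetry).} Let $P_i$ be the diagonal matrix with $i$ in entry $i$ and $1$ elsewhere. The phase shift $a_i\mapsto ia_i$ acts as $(x_i,p_i)\mapsto(-p_i,x_i)$, which conjugates $\operatorname{diag}(e^{2s_i},e^{-2s_i})$ into $\operatorname{diag}(e^{-2s_i},e^{2s_i})$. So the initial state squeezed by $\sq$ and then rotated by $P_i$ is the state squeezed by $\sq^{(i)}:=(s_1,\ldots,-s_i,\ldots,s_m)$, and therefore $S_2(UP_i,\sq)=S_2(U,\sq^{(i)})$. I will double-check the ordering convention (whether $P_i$ is applied before $U$ or after) against the covariance transformation. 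Given invariance $UP_i\overset{d}{=}U$, we get
\begin{align*}
\E\Phi(S_2(U,\sq))=\tfrac12\E\left[\Phi(S_2(U,\sq))+\Phi(S_2(U,\sq^{(i)}))\right].
\end{align*}

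\textbf{Step 3 (conclusion).} Since $S_2\ge0$ and $\Phi$ is convex and increasing on $\R_{\ge0}$, the composition $\Phi\circ S_2(U,\cdot)$ is convex in $s_i$ for every fixed $U$. Thus $g_U(s_i):=\Phi(S_2(U,\sq))$ is convex in $s_i$, so $h_U(s_i):=\frac12\left(g_U(s_i)+g_U(-s_i)\right)$ is convex and even. An even convex function $h$ of one variable satisfies $h(x)\le h(y)$ whenever $|x|\le|y|$, because $x$ is a convex combination of $\pm y$. Taking expectations over $U$ preserves this, with finiteness of the expectations assumed, as is implicit in the statement. The cases $\Phi(t)=t$ and $\Phi(t)=t^2$ give the two stated moments.

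\textbf{Main obstacle.} The crucial point is Step 1: convexity in $s_i$ requires the nonnegativity of the coefficients $c_T$, and the Cauchy--Binet expansion supplies exactly that. The remaining thing to verify carefully is the sign and phase convention in Step 2, namely that a $\pi/2$ phase shift, and not some other element of $\{\pm1,\pm i\}$, is what flips $s_i\mapsto -s_i$. Only $\pm i$ do this, and these are included in the assumed invariance.
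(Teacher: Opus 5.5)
Your proposal is correct and follows essentially the same route as the paper. Both arguments use the Cauchy--Binet expansion of $\det\sigma(U)$ as a nonnegative combination of exponentials $e^{2\langle \xi,\sq\rangle}$ to get convexity of $\sq\mapsto S_2(U,\sq)$. Both then use the $\pi/2$ phase-shift invariance to make $s_i\mapsto \E\Phi(S_2(U,\sq))$ even and convex, hence increasing in $|s_i|$. The only cosmetic difference is that you symmetrize pointwise in $U$ before averaging, whereas the paper averages first and then uses evenness.
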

This proves a generalization of the conjecture \cite[Conjecture 10]{iosue2023page}. Note it also applies to non-Haar random ensembles such as the depth-$d$ random linear optical networks studied in \cite{shou2026entanglement}.
As noted in \cite{iosue2023page}, the non-averaged quantity $S_2(U,\sq)$ need not satisfy a monotonicity property in the squeezing. A specific example is when $U$ is a real orthogonal matrix. If all modes are equally squeezed with squeezing parameter $s>0$, then $S_2(U,s)=0$ since $UU^T=I$, and the covariance matrix is unchanged under evolution by $U$. However, one can obtain nonzero entanglement for example by setting some of the squeezing parameters $s$ to $0$. Thus the averaging over the random unitary $U$ is crucial for the theorem.

As an immediate application of the theorem, let $s_\mathrm{min}:=\min_i|s_i|$ and $s_\mathrm{max}:=\max_i|s_i|$. For scalar $s$, let $S_2(U,s):=S_2(U,(s,\ldots,s))$. Then for any ensemble as in the theorem,
\begin{align}\label{eqn:s-compare}
\E S_2(U,s_\mathrm{min})\le \E S_2(U,\sq)&\le \E S_2(U,s_\mathrm{max}).
\end{align}

\begin{proof}[Proof of Theorem~\ref{thm:monotonicity}]
We first show that for any fixed unitary $U$, the map $\sq\mapsto S_2(U,\sq)$ is convex. This will follow from the formula (see e.g. \cite{Fukuda2019Typical-entangl,Serafini2017Quantum-Continu,iosue2023page}) for the R\'enyi-2 entropy $S_2(U)=\frac12\log\det\sigma(U)$ in terms of the reduced evolved covariance matrix
\begin{align}
\sigma(U)=\hat P \eta(U)\sigma_0(\sq)\eta(U)^T\hat P^T,
\end{align}
where $\sigma_0(\sq)=\operatorname{diag}(e^{2s_1},\ldots,e^{2s_m},e^{-2s_1},\ldots,e^{-2s_m})$ is the initial covariance matrix, $\eta(U)=\begin{pmatrix}\Re U&\Im U\\-\Im U&\Re U\end{pmatrix}$ is the usual embedding of $\U(m)$ into $\Sp(2m)\cap O(2m)$, and $\hat P=P\oplus P$ for $P:\C^m\to\C^\Gamma$ the projection onto $\Gamma$.
Letting $A(U):=\hat P\eta(U)$ which is $2k\times 2m$, we then consider $\det(A(U)\sigma_0(\sq)A(U)^T)$. This can be evaluated using the Cauchy--Binet formula (which generalizes the relation $\det(AB)=\det(BA)$ for square matrices to a formula for rectangular matrices). Letting $A_I$ denote the $2k\times|I|$ matrix consisting of the columns of $A$ indexed by $I$, and recalling that $\sigma_0(\sq)$ is diagonal, applying the Cauchy--Binet formula gives
\begin{align*}
S_2(U,\sq)&=\frac12\log\det(A(U)\sigma_0(\sq)A(U)^T)
=\frac12\log\sum_{\substack{I\subset[2m]\\|I|=2k}} \det(A_{I}(U))^2 e^{2\langle \xi_I,\sq\rangle},\numberthis\label{eqn:logdetsum}
\end{align*}
where $\xi_I\in\R^m$ is defined via $(\xi_I)_j=\oneb_I[j]-\oneb_I[j+m]$.
Each term in the sum in \eqref{eqn:logdetsum} is a nonnegative exponential in $\sq$ so is log-convex in $\sq$, and the sum of log-convex functions is log-convex, so $\sq\mapsto S_2(U,\sq)$ is convex.

Additionally, $\E S_2(U,\sq)$ is invariant under $s_i\mapsto -s_i$ by the random ensemble property. So each map $s_i\mapsto \E S_2(U,\sq)$ is convex and even, and so must be increasing in $|s_i|$.
For any function $\Phi:\R_{\ge0}\to\R$ which is convex and increasing, $\Phi\circ S_2(U,\cdot)$ is also convex, and so by the same argument as above, the map $s_i\mapsto \E\Phi(S_2(U,\sq))$ is then also increasing in $|s_i|$.
\end{proof}

The lower bound in \eqref{eqn:s-compare} is not useful when some of the modes are vacuum ($s_i=0$).
We consider the setup where $L$ input modes are squeezed with squeezing parameter $s>0$, and the rest are vacuum, and prove the following bounds for Haar random $U$.
Note that the small squeezing case for arbitrary $\sq$ where $s_\mathrm{max}=o(1)$ is already covered in \cite[Corollary 9]{iosue2023page}, so we focus on bounds which are useful for fixed $s$ (though they apply to arbitrary $s$). 
When clear from context, we let $s\oneb_{L}$ denote both the length $m$ vector consisting of $s$ repeated $L$ times and $0$ repeated $m-L$ times, and the length $L$ vector consisting of $s$ repeated $L$ times.

\begin{thm}[$L$ modes equally squeezed]\label{thm:unequal}
Let $U\dsim\mathcal H_m$ be Haar random, and consider an initial Gaussian state with $L$ modes squeezed with squeezing parameter $s>0$, and the remaining $m-L$ modes vacuum. Denote by $S_2(U,s\oneb_L)$ the R\'enyi-2 entropy for a size $k$ subsystem starting from such an initial state. Then
\begin{align}\label{eqn:lmodes}
\E S_2\bigg(U,\frac{L}{m}s\bigg)\le \E S_2(U,s\oneb_{L})\le \E S_2(U,s).
\end{align}
Also,
\begin{align}\label{eqn:lmodes2}
\frac12e^{-4s}\sinh^2(2s)\frac{Lk(m-k)}{m(m+1)}\le \E S_2(U,s\oneb_{L}) \le \min(k,m-k)\log\bigg(1+\frac{L}{m}(\cosh(2s)-1)\bigg).
\end{align}
In particular, letting $r:=k/m$, then \eqref{eqn:lmodes2} implies $\E S_2(U,s\oneb_L)=\Theta_s(\min(r,1-r))L$. Note that for large $s$ and $L\propto m$, the lower bound in \eqref{eqn:lmodes} can give a better (linear) $s$ dependence.
\end{thm}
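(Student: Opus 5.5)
The upper bound in \eqref{eqn:lmodes} is immediate from Theorem~\ref{thm:monotonicity}, since $s\oneb_L\le(s,\ldots,s)$ coordinatewise in absolute value and Haar measure is invariant under right multiplication by phase shifts. For the lower bound in \eqref{eqn:lmodes}, I would reuse the convexity of $\sq\mapsto S_2(U,\sq)$ for fixed $U$, which was established via \eqref{eqn:logdetsum}. Haar measure is invariant under right multiplication by permutation matrices $\pi$. Permuting the input squeezing vector is the same as replacing $U$ by $U\pi$, so $\E S_2(U,\pi s\oneb_L)=\E S_2(U,s\oneb_L)$ for every $\pi\in S_m$. Averaging over $\pi$ and applying Jensen pointwise in $U$ then gives
\begin{align*}
S_2\Big(U,\tfrac1{m!}\textstyle\sum_\pi \pi s\oneb_L\Big)\le \tfrac1{m!}\textstyle\sum_\pi S_2(U,\pi s\oneb_L).
\end{align*}
The averaged vector is $\frac{L}{m}s$ in every coordinate, and taking $\E$ yields the claim.

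For the upper bound in \eqref{eqn:lmodes2}, I would use concavity of $\log\det$ on positive definite matrices to get $\E S_2\le\frac12\log\det\E\sigma(U)$. I would then compute $\E[\eta(U)\sigma_0\eta(U)^T]$ for Haar $U$. Invariance under $U\mapsto e^{i\theta}U$ kills the anisotropic (squeezing) part $\frac12(e^{2s_i}-e^{-2s_i})$. The isotropic part $\cosh(2s_i)$ averages to $\frac1m\sum_i\cosh(2s_i)$ times the identity, so $\E\sigma=(1+\frac{L}{m}(\cosh 2s-1))I_{2k}$. This gives the bound $k\log(\cdot)$. Because the global state is pure, $S_2$ of $\Gamma$ equals $S_2$ of its complement, and applying the same estimate to the complement gives the $\min(k,m-k)$ factor.

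For the lower bound in \eqref{eqn:lmodes2}, I would start from \eqref{eqn:logdetsum}. Since $\det(AA^T)=1$ (rows of $A=\hat P\eta(U)$ are orthonormal), the weights $w_I:=\det(A_I)^2$ form a probability distribution on $I$. Writing $X_I:=2\langle\xi_I,s\oneb_L\rangle\in[-2Ls,2Ls]$, we get $S_2=\frac12\log\E_w e^{X}$. A crude range bound is bad here, because $X$ can be as large as $O(Ls)$. Instead I would use a second-order Jensen-type bound in which the exponential factor $e^{-4s}$ comes from each individual coordinate of $\sq$. Concretely, I would expand $\log\det(A\sigma_0A^T)$ one mode at a time, interpolating $s_i$ from $0$ to $s$ along the $L$ squeezed modes, and lower bound the second derivative of the log-sum-exp. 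Each such second derivative is a variance of $\pm2(\xi_I)_i$ under a tilted measure. That tilted measure is comparable to $w$ up to a factor $e^{\pm 4s}$, which should account for the $e^{-4s}$. The first-order terms have Haar mean zero, by the $U\mapsto iU$-type phase symmetry of the previous theorem. What remains is to lower bound $\E_U\mathrm{Var}_w$ of a single coordinate $(\xi_I)_i$ at $\sq=0$. This is an explicit degree-four polynomial in the entries of $U$, essentially $\sum_{j\in\Gamma}|(UU^T)_{ji}|^2$-type quantities, coming from $\frac{d^2}{ds_i^2}\log\det$ at the vacuum. Evaluating it with Haar second moments (Weingarten for $\U(m)$) should produce the factor $\frac{k(m-k)}{m(m+1)}$ per mode and the $\sinh^2(2s)$ after integrating in $s$. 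I expect this step to be the main obstacle: the tilted-measure comparison and the exact Weingarten computation must give precisely $\frac12e^{-4s}\sinh^2(2s)\frac{Lk(m-k)}{m(m+1)}$.

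The final $\Theta_s(\min(r,1-r))L$ statement then follows by comparing the two sides of \eqref{eqn:lmodes2}, using $\log(1+x)\le x$ on the upper side.
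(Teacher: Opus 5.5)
Your arguments for both inequalities in \eqref{eqn:lmodes} and for the upper bound in \eqref{eqn:lmodes2} are the same as the paper's: monotonicity; permutation averaging with convexity and Jensen; concavity of $\log\det$ with $\E\sigma(U)=\nu I_{2k}$, plus purity for the $\min(k,m-k)$. The gap is in the lower bound of \eqref{eqn:lmodes2}, and it is not where you placed it. The vacuum computation is easy: $\mathrm{Var}_w((\xi_I)_i)=2p_i(1-p_i)$ with $p_i=\sum_{x\in\Gamma}|U_{xi}|^2$, and $\E p_i(1-p_i)=\frac{k(m-k)}{m(m+1)}$.

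What fails is the claim that the tilted measure is within $e^{\pm4s}$ of $w$. When you switch on $s_i$ with $s_1=\dots=s_{i-1}=s$ already on, the tilted weights are proportional to $w_I e^{2\langle\xi_I,\sq^{(i-1)}\rangle+2t(\xi_I)_i}$. Since $2\langle\xi_I,\sq^{(i-1)}\rangle$ ranges over $[-2(i-1)s,2(i-1)s]$, the density with respect to $w$ is only bounded below by $e^{-4((i-1)s+t)}$. Reducing step $i$ to a vacuum variance therefore costs $e^{-4(i-1)s}$, and $\sum_{i\le L}e^{-4(i-1)s}\le (1-e^{-4s})^{-1}$ stays bounded in $L$. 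This is exactly the $O(Ls)$ problem you wanted to avoid, and it would not even give linear growth in $L$. You need a lower bound on the variance under the tilted measure itself, uniform in the earlier modes, and the proposal does not supply one. Also, even for a single mode your scheme integrates $(s-t)e^{-4t}$, so nothing in it produces the factor $\sinh^2(2s)$.

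Your interpolation can be rescued. The tilted weights are $\det(\tilde A_I)^2/\det(\tilde A\tilde A^T)$ with $\tilde A=A\sigma_0(\sq)^{1/2}$, which is a projection determinantal measure. Its kernel $K$ projects onto the row space of $\tilde A$, and $I-K$ projects onto the row space of $\hat P_{\Gamma^c}\eta(U)\sigma_0(\sq)^{-1/2}$. The variance of $(\xi_I)_i$ equals $K_{ii}(1-K_{ii})+K_{jj}(1-K_{jj})+2K_{ij}^2$ with $j=i+m$. Using $\sigma_0(\sq)^{\pm1}\le e^{2s}$ on both sides gives $\mathrm{Var}\ge 2e^{-4s}p_i(1-p_i)$. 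Hence $\E S_2(U,s\oneb_L)\ge 2s^2e^{-4s}\frac{Lk(m-k)}{m(m+1)}$. That is enough for the $\Theta_s$ claim, but it is strictly weaker than \eqref{eqn:lmodes2}, since $\sinh(2s)\ge 2s$.

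The paper avoids interpolation altogether. It writes the full covariance $\tilde\sigma(U)$ in blocks $A_\Gamma, C, A_{\Gamma^c}$. The Schur complement, together with $\det\tilde\sigma(U)=1$ and purity ($\det A_\Gamma=\det A_{\Gamma^c}$), gives $S_2=-\frac14\log\det(I-Y)$ with $Y=A_{\Gamma^c}^{-1/2}C^TA_\Gamma^{-1}CA_{\Gamma^c}^{-1/2}$. Then $-\log(1-y)\ge y$ and $A_\Gamma^{-1},A_{\Gamma^c}^{-1}\ge e^{-2s}$ give $S_2\ge\frac14e^{-4s}\|C\|_\hs^2$. Since $C=(\cosh(2s)-1)M_1+\sinh(2s)M_2$ with $M_1\perp M_2$, the factor $\sinh^2(2s)$ appears directly as a coefficient. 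A single fourth-moment Weingarten computation then gives $\E\|M_2\|_\hs^2=\frac{2Lk(m-k)}{m(m+1)}$.
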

\begin{proof}
The upper bound in \eqref{eqn:lmodes} follows from the previous monotonicity theorem. The lower bound also follows from the result that $\sq\mapsto S_2(U,\sq)$ is convex. By Haar-invariance of $U$ over permutations of the locations of the $L$ initial squeezed modes, followed by Jensen's inequality, we have
\begin{align*}
\E_U S_2(U,s\oneb_L)&=\E_U\E_{\sigma\in S_m}S_2(U,\sigma(s\oneb_L,\mathbf 0_{m-L}))\\
&\ge \E_US_2(U,\E_{\sigma\in S_m}\sigma(s\oneb_L,\mathbf 0_{m-L}))
=\E_US_2\bigg(U,\frac{L}{m}s\bigg).\numberthis
\end{align*}

The upper bound in \eqref{eqn:lmodes2} follows from the application of Jensen's inequality in the proof of \cite[Corollary 9]{iosue2023page}. Let $\nu=\frac1m\sum_i \cosh(2s_i)=\frac{L}{m}\cosh(2s)+1-\frac{L}{m}$; then \cite[Eq.~(A5)]{iosue2023page} shows $\E\sigma(U)=\nu I_{2k}$. Since $X\mapsto\log\det X$ is concave on positive-definite matrices \cite[\S3.1.5]{boyd2004convex}, then for $k\le m/2$, Jensen's inequality gives
\begin{align}
\E S_2(U,s\oneb_L)&\le \frac12\Tr\log\E\sigma(U)
=k\log\bigg(1+\frac{L}{m}(\cosh(2s)-1)\bigg).
\end{align}

For the lower bound in \eqref{eqn:lmodes2}, by symmetry $r\leftrightarrow 1-r$ we may consider a subsystem $\Gamma$ with $|\Gamma|=k\le m/2$.
Let $\tilde \sigma(U)=\eta(U)\sigma_0\eta(U)^T$ be the (unreduced) covariance matrix for the evolved initial state under $U$, and write it in block form across $\Gamma\oplus\Gamma^c$,
\begin{align*}
\tilde\sigma(U)=\begin{pmatrix}A_\Gamma&C\\ C^T&A_{\Gamma^c}\end{pmatrix},
\end{align*}
for $A_\Gamma:=\hat P_{\Gamma}\tilde\sigma(U)\hat P_{\Gamma}^T$, $A_{\Gamma^c}:=\hat P_{\Gamma^c}\tilde\sigma(U)\hat P_{\Gamma^c}^T$, and $C:=\hat P_\Gamma\tilde\sigma(U)\hat P_{\Gamma^c}^T$, where $\hat P_\Gamma=P_\Gamma\oplus P_\Gamma$ for $P_\Gamma:\C^m\to\C^\Gamma$ the projection onto $\Gamma$, and similarly for $\hat P_{\Gamma^c}$.
Letting $Q_L$ be the $m\times m$ projection matrix onto the $L$ squeezed input modes and using the covariance matrix expressions in \cite[\S A.1]{iosue2023page}, we see for example that
\begin{equation}\label{eqn:C}
    C= (\cosh(2s)-1)M_{1} + \sinh(2s)M_{2},
\end{equation}
where 
\begin{equation}
    M_{1} = \begin{pmatrix}
            P_\Gamma\Re[UQ_LU^{\dag}]P_{\Gamma^c}^T & P_\Gamma\Im[UQ_LU^{\dag}]P_{\Gamma^c}^T \\
             -P_\Gamma\Im[UQ_LU^{\dag}]P_{\Gamma^c}^T & P_\Gamma\Re[UQ_LU^{\dag}]P_{\Gamma^c}^T 
        \end{pmatrix},
\end{equation}
and
\begin{equation}
    M_{2} = \begin{pmatrix}
            P_\Gamma\Re[\bar{U}Q_LU^{\dag}]P_{\Gamma^c}^T & P_\Gamma\Im[\bar{U}Q_LU^{\dag}]P_{\Gamma^c}^T \\
            P_\Gamma\Im[\bar{U}Q_LU^{\dag}]P_{\Gamma^c}^T & -P_\Gamma\Re[\bar{U}Q_LU^{\dag}]P_{\Gamma^c}^T 
        \end{pmatrix}.
\end{equation}

By Schur complement, $1=\det\tilde\sigma(U)=\det(A_\Gamma)\det(A_{\Gamma^c})\det(I-A_{\Gamma^c}^{-1/2}C^TA_\Gamma^{-1}CA_{\Gamma^c}^{-1/2})$.
Using that $\det(A_\Gamma)=\det(A_{\Gamma^c})$ since the state is pure, we then have
\begin{align}
S_2(U,s\oneb_L)=\frac12\log\det A_\Gamma
&=-\frac14\log\det(I-A_{\Gamma^c}^{-1/2}C^TA_\Gamma^{-1}CA_{\Gamma^c}^{-1/2}).
\end{align}
$A_\Gamma$ and $A_{\Gamma^c}$ are symmetric, so $Y:=A_{\Gamma^c}^{-1/2}C^TA_\Gamma^{-1}CA_{\Gamma^c}^{-1/2}=(A_\Gamma^{-1/2}CA_{\Gamma^c}^{-1/2})^T(A_\Gamma^{-1/2}CA_{\Gamma^c}^{-1/2})\ge0$.
Also, $A_\Gamma^{-1},A_{\Gamma^c}^{-1}\ge e^{-2s}$, since $A_\Gamma,A_{\Gamma^c}\le e^{2s}$ as $\tilde\sigma(U)$ has the same spectrum as $\sigma_0$.
Since $\tilde\sigma(U)>0$ and $A_\Gamma>0$, the Schur complement and $1-Y$ are also $>0$. 
Then $0\le Y<1$, and using that $-\log(1-y)\ge y$ for $0\le y\le1$, we obtain
\begin{align*}
S_2(U,s\oneb_L)&\ge\frac14\Tr(A_{\Gamma^c}^{-1/2}C^TA_\Gamma^{-1}CA_{\Gamma^c}^{-1/2})
\ge\frac14e^{-4s}\|C\|_\hs^2.\numberthis\label{eqn:s2c}
\end{align*}

Next, using \eqref{eqn:C}, we compute
\begin{align}\label{eqn:cm12}
\|C\|_\hs^2&=(\cosh(2s)-1)^2\|M_1\|_\hs^2+\sinh^2(2s)\|M_2\|_\hs^2,
\end{align}
since one can check that $\langle M_1,M_2\rangle_\hs=\Tr(M_1^TM_2)=0$.
We have $\|M_2\|_\hs^2=2\|P_\Gamma UQ_LU^T P_{\Gamma^c}^T\|_\hs^2$, so
\begin{align*}
\E\|M_2\|_\hs^2&=2\sum_{x\in\Gamma}\sum_{y\in\Gamma^c}\E|\langle x|UQ_LU^T|y\rangle|^2\\
&=2\sum_{x\in\Gamma}\sum_{y\in\Gamma^c}\sum_{\ell,n=1}^L\E U_{x\ell}U_{y\ell}\bar U_{xn}\bar U_{yn}
=\frac{2Lk(m-k)}{m(m+1)},\numberthis
\end{align*}
using Weingarten calculus \cite{CollinsSniady2006} for the fourth moment in the last line.
Thus with \eqref{eqn:s2c} and \eqref{eqn:cm12}, we get the lower bound in \eqref{eqn:lmodes2}. Note while we could also include the contribution from $\|M_1\|_\hs^2=2\|P_\Gamma UQ_LU^\dagger P_{\Gamma^c}^T\|_\hs^2$, it would give at most a small constant factor improvement.

The asymptotic $\E S_2(U,s\oneb_L)=\Theta_s(\min(r,1-r)L)$ follows from \eqref{eqn:lmodes2}. For large $s$, we can use the lower bound in \eqref{eqn:lmodes} as follows. Note that $k-\E\Tr W=k-\E\|F\|_\hs^2=\Theta(k(m-k)/m)=\Theta(\min(k,m-k))$. Then since $\Tr W^\ell\le \Tr W$ as $W$ has eigenvalues in $[0,1]$, the series expansion \eqref{eqn:s2-series} gives
\begin{align*}
\E S_2\bigg(U,\frac{L}{m}s\bigg)&\ge (k-\E\Tr W)\log\cosh\left(\frac{2Ls}{m}\right)\\
&\ge\Theta(\min(k,m-k))\log\cosh\left(\frac{2Ls}{m}\right)\\
&\ge\Theta(\min(r,1-r))Ls,\numberthis
\end{align*}
for example if $\frac{Ls}{m}-\log 2>0$ so that $\log\cosh(2Ls/m)\ge \log(\frac12e^{2Ls/m})=\Theta(Ls/m)$.
\end{proof}

\vspace{2mm}
\noindent
\textbf{Acknowledgments.}
This project used GPT-5.5 Thinking and Pro for coming up with proof ideas and methods, as well as for general checking and proofreading. The paper was written by the authors and all results and proofs were checked and validated by the authors, who are fully responsible for the final content. L.S., J.T.I, A.E., and A.V.G.\ acknowledge support from the U.S.~Department of Energy, Office of Science, Accelerated Research in Quantum Computing, Fundamental Algorithmic Research toward Quantum Utility (FAR-Qu). L.S., J.T.I, A.E., and A.V.G.\ were also supported in part by ARL (W911NF-24-2-0107), ONR MURI, NSF QLCI (award No.~OMA-2120757), NQVL:QSTD:Design:FTL, DoE ASCR Quantum Testbed Pathfinder program (award No.~DE-SC0024220), NSF STAQ program, and AFOSR MURI. L.S., J.T.I, A.E., and A.V.G.\ also acknowledge support from the U.S.~Department of Energy, Office of Science, National Quantum Information Science Research Centers, Quantum Systems Accelerator (award No.~DE-SCL0000121).
Y.-X.W.~acknowledges support from a QuICS Hartree Postdoctoral Fellowship. 
J.T.I's contributions were made during his time at UMD.

\vspace{2mm}
\noindent
\textit{Note added.} During the final stages of completing this manuscript, we became aware of independent work by Zhao \cite{zhao2026exact}, which also derives a closed form for the second moments $M_2(k,n)$ and weak anticoncentration transition at $k\propto n^2/\log n$.

\bibliographystyle{amsnoalpha_edit}
\bibliography{gbs.bib}

\end{document}